\let\expandafter\oldproof\csname\string\proof\endcsname
\let\oldendproof\endproof
\renewenvironment{proof}[1][\proofname]{%
	\oldproof[\bf #1]%
}{\oldendproof}
\theoremstyle{plain}
\newtheorem{theorem}{Theorem}
\newtheorem{lemma}{Lemma}[section]
\newtheorem{observation}[lemma]{Observation}
\newtheorem{corollary}[theorem]{Corollary}
\newtheorem{definition}[lemma]{Definition}
\newcommand{\inj}{\mathrm{inj}}
\newcommand{\aut}{\mathrm{aut}}
\definecolor{RED}{rgb}{1,0,0}\definecolor{BLUE}{rgb}{0,0,1} 
\title{Counting Homomorphic Cycles in Degenerate Graphs}
\author{Lior Gishboliner\thanks{ETH Zurich. 
		Email: lior.gishboliner@math.ethz.ch.}
	\and Yevgeny Levanzov\thanks{School of Mathematics, Tel Aviv University, Tel Aviv 69978, Israel. Email: yevgenyl@mail.tau.ac.il. Supported in part by ERC Consolidator Grant 863438 and NSF-BSF Grant 20196.}
	\and Asaf Shapira\thanks{School of Mathematics, Tel Aviv University, Tel Aviv 69978, Israel. Email: asafico@tau.ac.il. Supported in part by ISF Grant 1028/16, ERC Consolidator Grant 863438 and NSF-BSF Grant 20196.}
	\and Raphael Yuster\thanks{Department of Mathematics, University of Haifa, Haifa 3498838, Israel. Email: raphael.yuster@gmail.com. Supported in part by ISF Grant 1028/16.}}
\begin{document}
	
	\maketitle
	
	\begin{abstract}
		
		Since counting subgraphs in general graphs is, by and large, a computationally demanding problem, it is natural to try and design fast algorithms for restricted families of graphs. One such family that has been extensively studied is that of graphs of bounded degeneracy (e.g., planar graphs). This line of work, which started in the early 80's, culminated in a recent work of Gishboliner et al., which highlighted the importance of the task of counting homomorphic copies of cycles (i.e., cyclic walks) in graphs of bounded degeneracy.
		
		Our main result in this paper is a surprisingly tight relation between the above task and the well-studied problem of {\em detecting (standard) copies} of directed cycles in {\em general directed} graphs. More precisely, we prove the following:
		
		\begin{itemize}
			\item One can compute the number of homomorphic copies of $C_{2k}$ and $C_{2k+1}$ in $n$-vertex graphs of bounded degeneracy in time $\tilde{O}(n^{d_{k}})$, where
			the fastest {\em known} algorithm for detecting directed copies of $C_k$
			in general $m$-edge digraphs runs in time $\tilde{O}(m^{d_{k}})$.
			
			\item Conversely, one can transform any $O(n^{b_{k}})$ algorithm for computing the number of homomorphic copies of $C_{2k}$ or of $C_{2k+1}$ in $n$-vertex graphs of bounded degeneracy, into an $\tilde{O}(m^{b_{k}})$ time algorithm for detecting directed copies of $C_k$ in general $m$-edge digraphs.
			
		\end{itemize}
		
		We emphasize that our first result does not use a black-box reduction (as opposed to the second result which does). Instead, we design an algorithm for computing the number of $C_k$-homomorphisms in degenerate graphs and show that one part of its {\em analysis} can be reduced to the analysis of the fastest known algorithm for detecting directed cycles in general digraphs, which was carried out in a recent breakthrough of Dalirrooyfard, Vuong and Vassilevska Williams. 
		As a by-product of our algorithm, we obtain a new
		algorithm for detecting $k$-cycles in directed and undirected graphs of bounded degeneracy that is faster than
		all previously known algorithms for $7 \le k \le 11$, and faster for all $k \ge 7$ if the matrix multiplication exponent is $2$.
	\end{abstract}


	\section{Introduction}
	Counting occurrences of small subgraphs in a given input graph is among the most fundamental algorithmic problems. 
	Most prominently, it is the subject of a rich line of research in parameterized complexity theory 
	\cite{CDM,C_Marx,DJ,FG,Grohe,Marx,Meeks,NP,V W,VW-W}, which has by now produced several important general results on the fixed-parameter tractability of subgraph counting problems \cite{DJ,Grohe,Marx}. Works in this area are too numerous to survey here, so we refer the reader to the above-cited papers for further references.
	On the practical side of things, subgraph counting is important due to the role played by subgraph counts in the analysis of real-world networks, see, e.g., \cite{MSIKCA,Przulj}, and the references therein.
	
	\subsection{Detecting and Counting Cycles}
	One of the most fundamental and well-studied cases of subgraph-counting is that of counting (and detecting) cycles. Such questions have been studied in both undirected and directed graphs. We will always denote by $n$ the number of vertices of the input graph and by $m$ its number of edges. 
	The problem of counting copies of a $k$-length cycle is known to be $\#W[1]$-hard \cite{FG},
	meaning that it is unlikely to admit an algorithm which runs in time 
	$f(k) \cdot n^{O(1)}$ for any computable function $f$. 
	For cycle detection, however, efficient algorithms exist and are the subject of numerous works \cite{AYZ_ColorCoding,AYZ,EG,IR,L-VW-W,Monien,YZ,YZ_2}.  
	Notably, Alon, Yuster and Zwick \cite{AYZ} presented several cycle detection algorithms for both the undirected and directed settings, improving upon earlier results of Itai and Rodeh \cite{IR} and Monien \cite{Monien}. Some of these algorithms used the technique of {\em color-coding}, introduced earlier in \cite{AYZ_ColorCoding}. Most relevant for us is the problem of detecting directed cycles in digraphs,
	where \cite{AYZ} provided a classical algorithm that detects a directed $k$-cycle (for a fixed $k$) in an $m$-edge directed graph running \nolinebreak in \nolinebreak time 
	\begin{equation}\label{eq:combinatorial_cycle_detect_runtime}
	O(m^{2-1/\lceil k/2 \rceil}).
	\end{equation}
	This is still the fastest {\em combinatorial} algorithm
	for directed $k$-cycle detection in terms of $m$; see the remark at the end of Section \ref{subsec:related_work} concerning its optimality.
	%
	However, faster algorithms exist if we use algebraic algorithms relying on fast matrix multiplication.
	Let $\omega$ denote the exponent of fast matrix multiplication, namely the infimum over all constants $t$ such that two $n \times n$ matrices can be multiplied using $\tilde{O}(n^t)$ field operations. It is known that
	$\omega < 2.373$ \cite{AV,Legall,V W 2}.
	Indeed, motivated by a simple $O(m^{2\omega/(\omega+1)})$ time algorithm of \cite{AYZ} for detecting a $3$-cycle, Yuster and Zwick \cite{YZ} set out to use fast matrix multiplication for $k$-cycle detection (for a fixed $k$).
	They obtained an algorithm running in time $\tilde{O}(m^{c_k})$, and were able to show that for $k=4,5$,
	the value of $c_k$ (which they explicitly computed for $k=4,5$) is strictly smaller than $2-1/\lceil k/2 \rceil$, thereby improving upon the aforementioned combinatorial algorithm. It is, to date, the fastest algorithm for directed $k$-cycle detection when $k=4,5$ in terms of $m$.
	For $k \ge 6$, the exact value of $c_k$ turned out to be hard to analyze, due to an involved use of dynamic programming, which resulted in a complicated recursive relation. This led Yuster and Zwick to raise a conjecture\footnote{In fact, they only conjectured the value of $c_k$ when $k$ is odd; for even $k$ it was not even clear what the ``right'' conjecture should be.} regarding the runtime of their algorithm (i.e., the value of $c_k$). 
	This conjecture was recently resolved by Dalirrooyfard, Vuong and Vassilevska Williams \cite{DVW}, establishing that the algorithm of Yuster and Zwick detects a directed $k$-cycle in a graph with $m$ edges in time
	$\tilde{O}(m^{c_k})$, \nolinebreak where\footnote{As in \cite{DVW}, the value of $c_k$ in \eqref{eq:c_k-bounds} assumes, for simplicity, that rectangular matrix multiplication is simulated by square matrix multiplication; the exact definition of $c_k$ is given in \eqref{eq:small-c_k}. If $\omega > 2$ and the fastest known rectangular matrix multiplication algorithms are used, $c_k$ results in a marginal improvement over the value in \eqref{eq:c_k-bounds}.}
	\begin{align}\label{eq:c_k-bounds}
	c_k & \le \frac{(k+1)\omega}{2\omega + k-1} \qquad\qquad  \textup{if } k \textup{ is odd,} \\
	c_k & \le \frac{k\omega - \frac{4}{k}}{2\omega + k-2-\frac{4}{k}} \qquad \, \textup{if } k \textup{ is even.} \nonumber
	\end{align}
	It should be noted that equality holds in the odd case when $\omega \le \frac{2k}{k-1}$\footnote{When $k$ is odd and $\omega > \frac{2k}{k-1}$, one can prove that $c_k \le 2-\frac{2}{k+1} < \frac{(k+1)\omega}{2\omega + k-1}$, which matches the runtime of the fastest combinatorial algorithm.}.
	Moreover, it was proved in \cite{DVW} that equality holds also in the even case if $\omega=2$
	and the exact value of $c_6$ was determined for all $\omega$.
	To date, for any pair $(k,\omega)$ with $k \ge 3$ and $2 \le \omega < 2.373$, the fastest algorithm
	for detecting a directed $k$-cycle in a directed $m$-edge graph (i.e., in terms of $m$ alone) is either the combinatorial
	$O(m^{2-1/\lceil k/2 \rceil})$ algorithm or the $\tilde{O}(m^{c_k})$ algorithm (which of them is faster depends on $(k,\omega)$). 
	It is easy to verify that if $\omega > 2$, then for all large enough $k$ the 
	$O(m^{2-1/\lceil k/2 \rceil})$ is faster, while if $\omega = 2$ then $\tilde{O}(m^{c_k})$ is always faster, and $c_k$ is approximately $2 - \frac{4}{k}$ in this case (for $k$ large). 
	
	\subsection{Counting Homomorphisms in Degenerate Graphs}
	In this work we are concerned with the problem of counting {\em homomorphisms} of cycles (i.e., cyclic walks of a given length) in graphs of {\em bounded degeneracy}\footnote{We note that counting {\em copies} of cycles remains $\#W[1]$-hard even in bounded-degeneracy graphs. This follows from the fact that counting $k$-cycles in general graphs can be reduced to counting $2k$-cycles in $2$-degenerate ones via the reduction which replaces each edge of the input graph by a path of length $2$ (thus producing a $2$-degenerate graph).}. 
	Arguably, homomorphism-counting is the most basic subgraph counting problem, to which other natural problems --- such as counting copies or induced copies --- can be reduced \cite{CDM}.  
	
	Recall that a graph is called {\em $d$-degenerate} if it admits a vertex ordering $v_1,\dots,v_n$ such that $v_i$ has at most $d$ neighbors among $v_{i+1},\dots,v_n$ (for every $1 \leq i \leq n$). 
	In the setting considered here, we look for algorithms which run in time $f(d) \cdot n^{\alpha}$, where $d$ is the degeneracy\footnote{We note that degeneracy is closely related to another well-studied graph parameter, namely arboricity, which is the minimum number of forests into which the edge-set of a graph can be partitioned. It is well-known that the arboricity of a $d$-degenerate graph is between $(d+1)/2$ and $d$.} of the input graph, $n$ is its number of vertices, and $\alpha$ does not depend on $d$. Such algorithms are particularly useful when the input graph has bounded degeneracy. The family of such graphs is rich, including for example all non-trivial minor-closed graph classes (in particular, planar graphs), 
	preferential attachment graphs \cite{BA}, and bounded expansion graphs \cite{NDeM}.  
	
	The study of subgraph counting in degenerate graphs goes back to a classical work of Chiba and Nishizeki \cite{CN} from the early 1980s. Very recently, this research direction has seen several substantial developments \cite{Bera-Sesh-GLS,BPS,BPS_2,Bressan}.  
	Bressan \cite{Bressan} gave a general algorithm for counting homomorphisms in bounded-degeneracy graphs, which in particular implies a sufficient condition\footnote{Bressan's result in its full generality states that for every graph $H$, one can count $H$-homomorphisms in bounded-degeneracy graphs in time $\tilde{O}(n^{\tau(H)})$, where $\tau(H)$ is a suitable width parameter called {\em DAG treewidth}.} (on graphs $H$) under which $H$-homomorphisms can be counted in time $\tilde{O}(n)$. 
	In \cite{Bera-Sesh-GLS}, it was shown that this sufficient condition is also necessary, and moreover, a clean combinatorial characterization of the graphs $H$ satisfying it was obtained. Specifically, it was shown that $H$-homomorphisms in bounded-degeneracy graphs can be counted in time $\tilde{O}(n)$ if and only if $H$ contains no induced cycles of length $6$ or larger\footnote{The ``only if" direction of this result is under a certain standard hardness assumption from fine-grained complexity \cite{A-VW}. The proof of this direction actually shows that if $H$ contains an induced $\ell$-cycle for some $\ell \geq 6$, then one cannot count $H$-homomorphisms in bounded-degeneracy graphs in time $O(n^{1+\gamma})$, where $\gamma > 0$ is some fixed constant.}. A similar result has been independently obtained in \cite{BPS_2}. These results highlight the important role played by cycles of length at least $6$, as being the minimal graphs whose homomorphisms cannot be counted in (almost) linear time (in bounded-degeneracy graphs).
	This motivates the study of the time complexity of counting homomorphisms of such cycles (with the intention of improving the running time of Bressan's general algorithm for cycles), which is precisely the question studied in the present paper. 
	
	\subsection{Our Results}
	Let $C_k$ denote the cycle of length $k$ (where $k$ is fixed throughout the paper).
	Here we state our results regarding the problem of counting $C_k$-homomorphisms in graphs of bounded degeneracy. 
	By the aforementioned result of Bressan \cite{Bressan}, if $3 \leq k \leq 5$ then $C_k$-homomorphisms can be counted in near-linear time (which is optimal), and we therefore focus on the case $k \geq 6$. 	
	As our main results in this paper show, the problem of counting $C_{2k}$- and $C_{2k+1}$-homomorphisms in bounded-degeneracy graphs is related to the problem of detecting directed $k$-cycles in general (i.e., not necessarily degenerate) digraphs. In particular, we shall design two algorithms:
	a combinatorial algorithm for counting cycle homomorphisms whose runtime corresponds to the runtime of the combinatorial $k$-cycle detection algorithm of \cite{AYZ} and 
	an algorithm for counting cycle homomorphisms whose runtime obeys the same recursive relation as the aforementioned directed-cycle detection algorithm of Yuster and Zwick \cite{YZ}. Combining these two algorithms and the work of Dalirrooyfard, Vuong and Vassilevska Williams \cite{DVW}, we will obtain the following theorem, which is our main result in this paper. Let $\textsc{hom-cnt}_{H}$ denote the problem of computing the number of $H$-homomorphisms. 
	
	\begin{theorem}[Main Result]\label{thm:main}
		For every $k \geq 2$, the following hold:\\
		(i) $\textsc{hom-cnt}_{C_{2k}}$ and $\textsc{hom-cnt}_{C_{2k+1}}$ can be solved in bounded-degeneracy graphs in time $\tilde{O}(n^{c_k})$, where $c_k$ is as defined in \eqref{eq:small-c_k}.\\
		(ii) $\textsc{hom-cnt}_{C_{2k}}$ and $\textsc{hom-cnt}_{C_{2k+1}}$ can be solved in bounded-degeneracy graphs in time $\tilde{O}(n^{2-1/\lceil k/2 \rceil})$.
	\end{theorem}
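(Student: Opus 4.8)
The plan is, for each fixed $\ell\in\{2k,2k+1\}$ separately, to express the number of $C_\ell$-homomorphisms in a $d$-degenerate graph as a sum of a bounded number of trace computations of the same shape as those underlying the directed $k$-cycle algorithms discussed above, and to evaluate these traces by (counting versions of) those algorithms. To set up, fix a degeneracy ordering of the input graph $G$ and orient every edge towards its later endpoint; this gives a digraph $\Gdir$ of maximum out-degree at most $d$. Hence from any vertex there are at most $d^{p}$ directed walks of length $p$ in $\Gdir$, so for every $0\le p\le\ell$ we may compute, in total time $O_{d,\ell}(n)$, the $n\times n$ matrix $W_{p}$ whose $(x,y)$ entry is the number of length-$p$ directed walks from $x$ to $y$ in $\Gdir$; each $W_{p}$ has at most $d^{p}n=O(n)$ nonzero entries.

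Now decompose an arbitrary homomorphism $\phi\colon C_\ell\to G$, viewed as a closed walk. Along the walk each of the $\ell$ consecutive pairs of images is either ascending or descending in the degeneracy order (the two images are distinct, as $G$ is loopless), so the walk splits at its local minima and local maxima into maximal monotone pieces, each of which, read in the direction of increasing position, is a directed walk in $\Gdir$. There are only $O_\ell(1)$ possible ascent/descent patterns; such a pattern has $r$ ascending pieces and $r$ descending pieces of prescribed lengths $a_1,b_1,\dots,a_r,b_r$ with $\sum_{j}(a_j+b_j)=\ell$, where $1\le r\le\lfloor\ell/2\rfloor=k$. Writing the valley and peak images as $u_1,\dots,u_r$ and $z_1,\dots,z_r$ and counting, for each piece, the number of directed walks of the right length and endpoints, one gets that the number of homomorphisms realising this pattern equals $\operatorname{tr}\bigl(W_{a_1}W_{b_1}^{\top}W_{a_2}W_{b_2}^{\top}\cdots W_{a_r}W_{b_r}^{\top}\bigr)$, and the walk constraints force the pattern to be exactly the prescribed one, so summing over the $O_\ell(1)$ patterns yields the exact $C_\ell$-homomorphism count. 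The cases $\ell=2k$ and $\ell=2k+1$ are handled identically — in the odd case, when $r=k$, one piece has length $2$ and the rest length $1$ — and the extremal pattern $r=k$ is a directed $k$-cycle structure, while patterns with $r<k$ involve fewer and longer pieces.

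It remains to evaluate the $O_\ell(1)$ traces. Reading $\operatorname{tr}(W_{a_1}W_{b_1}^{\top}\cdots W_{a_r}W_{b_r}^{\top})$ as a count of weighted closed walks that alternate between the sparse relations encoded by the $W_{a_j}$ and those encoded by the $W_{b_j}^{\top}$, and grouping consecutive relations in pairs (the composition of such a pair, a possibly dense matrix, is kept only implicitly), one is left with the counting analogue of directed $r$-cycle detection, $r\le k$, in a digraph presented through the $O(n)$ sparse entries of the matrices $W_{a_j},W_{b_j}$. For part (i) we evaluate this by a counting variant of the Yuster--Zwick directed $k$-cycle algorithm (a directed $k$-cycle being, likewise, a product of $k$ adjacency matrices); for part (ii) we instead use the combinatorial algorithm of Alon--Yuster--Zwick behind \eqref{eq:combinatorial_cycle_detect_runtime}. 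Since the matrices involved carry only $O(n)$ nonzero entries, these evaluations run in time $\tilde{O}(n^{c_k})$ and $\tilde{O}(n^{2-1/\lceil k/2\rceil})$ respectively, with $c_k$ as in \eqref{eq:small-c_k}; patterns with $r<k$ cost no more, since the relevant exponent is nondecreasing in $k$ (and short cases can in any case be padded up to $k$).

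The single step not argued from scratch is the evaluation of the traces for part (i). The Yuster--Zwick algorithm and the sharp analysis of its running time by Dalirrooyfard--Vuong--Vassilevska Williams \cite{DVW} are phrased for \emph{detecting} a directed $k$-cycle, whereas here we must \emph{count} weighted closed walks; the work is to design a counting variant of that algorithm — propagating multiplicities through its degree-based case analysis and recursion rather than mere Boolean reachability, and, crucially, never materialising the dense products $W_{a_j}W_{b_j}^{\top}$ (a ``peak'' vertex has bounded support on one side but possibly large degree on the other, which is exactly the regime the degree split is designed for) — and then to observe that its running-time recursion is precisely the recursion solved in \cite{DVW}, so that the exponent $c_k$ is inherited verbatim. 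The remaining ingredients (enumerating the $O_\ell(1)$ patterns, checking correctness of the monotone decomposition when the walk repeats vertices, and reducing the $r<k$ patterns) are routine and combinatorial.
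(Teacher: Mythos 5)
Your reduction to $O_\ell(1)$ trace evaluations is correct, and it is in fact a genuine (and pleasant) simplification of one part of the paper's argument. The paper also decomposes $\hom(C_\ell,G)$ over acyclic orientations of $C_\ell$ and reduces each orientation to an alternating cycle $\vec{C}_{2r}$ with $r\le k$, but it does so by aggregating walks of all lengths into a single edge weight, which produces a linear combination over all subdivisions and therefore needs Theorem \ref{thm:reduction-subdivision_main} together with the Curticapean--Dell--Marx machinery (Lemma \ref{lem:hom-linear-combination}, i.e.\ tensor products and the Erd\H{o}s--Lov\'asz--Spencer invertibility argument) to disentangle the individual terms. Keeping a separate sparse walk matrix $W_p$ for each length and writing each orientation's count directly as $\operatorname{tr}(W_{a_1}W_{b_1}^{\top}\cdots W_{a_r}W_{b_r}^{\top})$ yields each term with no disentangling needed, handles the one-source case uniformly, and is correct as you argue (each homomorphism determines a unique ascent/descent pattern, and homomorphisms of an oriented path with prescribed endpoints are exactly directed walks).

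The genuine gap is that the theorem's actual content --- algorithms evaluating these traces within the exponents $c_k$ and $2-1/\lceil k/2\rceil$ --- is asserted rather than constructed. For part (i) you defer ``designing a counting variant of Yuster--Zwick'' and claim its runtime recursion coincides with the one solved in \cite{DVW}; this deferred step is precisely Theorem \ref{thm:UZ-C_2k-alternating}, the paper's main technical work: one must pass to the sparse valley-paired compositions (your $W_{b_j}^{\top}W_{a_{j+1}}$, the analogue of the paper's cherry table $\mathcal{HM}$, with only $O(n)$ nonzero entries), split the peak vertices into $\log n$ in-degree classes, and set up, per degree-class tuple, the three-way recursion \eqref{eq:runtime-recur-Pij} (two sparse extension steps plus one rectangular matrix product) and the trace-splitting optimization \eqref{eq:big-C(K)}, so that the exponent is literally the quantity $C_k(d_0,\dots,d_{k-1})$ analyzed in \cite{DVW}. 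Without carrying this out, the bound $\tilde{O}(n^{c_k})$ is unsupported. The situation for part (ii) is worse, because there the deferral is not even flagged: the $O(m^{2-1/\lceil k/2\rceil})$ algorithm of \cite{AYZ} \emph{detects simple} directed cycles, and there is no black-box ``counting variant'' of it for weighted closed walks; the paper instead builds a meet-in-the-middle counting scheme from scratch (Lemma \ref{lem:path_counting} and Theorem \ref{thm:comb}), with threshold $\Delta=n^{1/\lceil \ell/2\rceil}$ and low/high degree signatures on the sinks so that the homomorphisms are partitioned exactly once, the all-low case being paired through $O(n\Delta^{\lceil \ell/2\rceil-1})$ endpoint pairs and the high case through $O(n^2/\Delta)$ pairs. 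You would need to supply this (or an equivalent) counting algorithm; as written, both runtime claims rest on unproven adaptations of detection algorithms.
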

	The algorithm of case (i) in Theorem \ref{thm:main} combines the general structure of the Yuster-Zwick algorithm with several significant additional twists used in the setting of degenerate graphs, inspired by the approach in \cite{Bressan}. 
	
	As opposed to the algorithm of case (i) which relies on fast matrix multiplication, the algorithm of case (ii) is purely combinatorial.
	Furthermore, it can be modified to give a combinatorial algorithm for counting $C_k$-homomorphisms in general (i.e., not necessarily degenerate) graphs, both directed and undirected, which runs in time $\tilde{O}(m^{2-1/\lceil k/2 \rceil})$. This algorithm can in turn be used to obtain a new combinatorial algorithm for directed $k$-cycle detection, which essentially matches the runtime of the best such combinatorial algorithm to date, namely the algorithm of \cite{AYZ} mentioned above, while being somewhat simpler. It should be noted that another combinatorial cycle-detection algorithm with the same runtime was given in \cite{VW-W-Y} (see also \cite{L-VW-W}). 
	The details appear in Section \ref{sec:general_graphs}.   
	
	
	The relation to directed-cycle detection is in fact more robust, as is evidenced by our next result, which states that counting $C_{2k}$- or $C_{2k+1}$-homomorphisms already in $2$-degenerate graphs is {\em at least as hard} as detecting a directed $k$-cycle in general digraphs. Note that if $k=2$ then both problems can be solved in time linear in the number of edges, so the interesting case is when $k \ge 3$. Formally, we prove the following theorem.
	
	\begin{theorem}\label{thm:count-vs-detect}
		Let $k \ge 3$. If there is an algorithm that computes $\textsc{hom-cnt}_{C_{2k}}$ or $\textsc{hom-cnt}_{C_{2k+1}}$ in $2$-degenerate graphs in $O(n^\alpha)$ time, then there is an algorithm that decides if an arbitrary directed $m$-edge graph has a directed $k$-cycle in time $O(m^\alpha)$ (randomized) and $O(m^\alpha \log m)$ (deterministic).
	\end{theorem}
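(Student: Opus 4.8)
The plan is to reduce directed $k$-cycle detection to a bounded number of $C_{2k}$- (resp.\ $C_{2k+1}$-) homomorphism-count queries on $2$-degenerate graphs obtained by subdividing the input, with a color-coding step inserted first. Given a digraph $G$ (we may discard isolated vertices, so $|V(G)|=O(m)$) and a coloring $c\colon V(G)\to\mathbb Z_k$, let $G'_c$ keep exactly the arcs $(u,v)$ with $c(v)\equiv c(u)+1$. Then $G'_c$ is \emph{$k$-layered}: a directed closed walk of length $k$ in it runs through the colors $c_0,c_0+1,\dots,c_0+k-1$, hence through $k$ distinct vertices, so it is automatically a directed $k$-cycle; and $G$ has a directed $k$-cycle iff $G'_c$ does for some $c$. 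A uniformly random $c$ makes a fixed directed $k$-cycle ``monotone'' with probability $\ge k^{1-k}=\Omega_k(1)$ (so $O_k(1)$ independent trials give the randomized statement), while a splitter / perfect-hash family of $O_k(\log n)$ colorings derandomizes this at the price of the $\log m$ factor. Henceforth $G'$ is $k$-layered with $O(m)$ vertices and arcs, exactly one color transition (say $k-1\to 0$) is the ``wrap'' transition, and every directed $k$-cycle of $G'$ uses exactly one wrap arc.

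\textbf{Step 2 (the target $C_{2k+1}$).} Let $\widetilde{G'}$ be the undirected graph underlying $G'$, and form $H$ by subdividing every non-wrap edge once and every wrap edge twice; then $|V(H)|=O(m)$, all non-original vertices of $H$ have degree $2$ so $H$ is $2$-degenerate, and a directed $k$-cycle of $G'$ becomes a $(2k+1)$-cycle in $H$. The crucial observation is that $H$ carries a labeling $\phi\colon V(H)\to\mathbb Z_{2k+1}$ (a color-$i$ vertex of $\widetilde{G'}$ gets $2i$, each subdivision vertex gets the intermediate label) under which every edge joins labels differing by exactly $\pm1$. Since $2k+1$ is odd, the $2k+1$ signed increments along any closed walk of length $2k+1$ in $H$ sum to $0$ in $\mathbb Z$, hence are all $+1$ or all $-1$; such a walk therefore meets each label once, so it is an honest $(2k+1)$-cycle, and reading off its labels shows it corresponds to a directed $k$-cycle of $G'$. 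Thus $\hom(C_{2k+1},H)=2(2k+1)\cdot\bigl(\#\,C_{2k+1}\text{-subgraphs of }H\bigr)$ is positive iff $G'$ has a directed $k$-cycle, and a single oracle call on $H$ per coloring settles this target.

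\textbf{Step 3 (the target $C_{2k}$).} Here plain $1$-subdivision makes $H$ bipartite and $\hom(C_{2k},H)=2\operatorname{tr}\bigl((A_{\widetilde{G'}}+D_{\widetilde{G'}})^{k}\bigr)$, which is always positive: the diagonal ``degree'' contributions, and (as $2k$ is even) the balanced closed walks, produce junk swamping the signal. To isolate the signal I would introduce two weights. Let $H_{\lambda,\mu}$ be obtained from $\widetilde{G'}$ by $1$-subdividing each non-wrap edge, replacing each wrap edge by $\lambda$ parallel length-$2$ paths, and, after these subdivisions, attaching to each original vertex enough leaves to multiply its degree by $\mu$; this is again $2$-degenerate with $O_k(m)$ vertices, and $\hom(C_{2k},H_{\lambda,\mu})=2\operatorname{tr}\bigl((A^{\mathrm{nw}}_{\widetilde{G'}}+\lambda A^{\mathrm{w}}_{\widetilde{G'}}+\mu(D^{\mathrm{nw}}+\lambda D^{\mathrm{w}}))^{k}\bigr)$ is a polynomial in $(\lambda,\mu)$ of degree $\le k$ in each variable. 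Querying the oracle at the $(k+1)^2$ points $(\lambda,\mu)\in\{1,\dots,k+1\}^{2}$ and interpolating recovers all coefficients; the coefficient of $\lambda^{1}\mu^{0}$ picks out exactly the closed walks of length $k$ in $\widetilde{G'}$ that use no degree step and traverse exactly one wrap edge. Splitting $A^{\mathrm{nw}}_{\widetilde{G'}}=A^{\mathrm{nw}}_{G'}+(A^{\mathrm{nw}}_{G'})^{\top}$ (and likewise for $\mathrm{w}$) and using that the colors of $G'$ are genuine integers in $\{0,\dots,k-1\}$ — so that around a closed walk the integer color increments ($\pm1$ per non-wrap step, $\mp(k-1)$ for the lone wrap step) sum to exactly $0$ — forces the non-wrap part of any such walk to be entirely forward or entirely backward. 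Hence this coefficient equals $4\operatorname{tr}(A_{G'}^{k})$, i.e.\ four times the number of directed closed $k$-walks of $G'$, junk-free; it is positive iff $G'$ has a directed $k$-cycle.

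\textbf{Assembly and the main obstacle.} Looping Step 2 or Step 3 over the $O_k(1)$ (randomized) or $O_k(\log n)$ (deterministic) colorings from Step 1 and answering ``yes'' iff some query is positive gives the claimed $O(m^\alpha)$ / $O(m^\alpha\log m)$ bounds, since each oracle call costs $O(m^\alpha)$ on an $O(m)$-vertex $2$-degenerate graph and there are $O_k(1)$ calls per coloring; the $2$-degeneracy and size bookkeeping are immediate because every gadget vertex has degree $\le 2$. The hard part is the $C_{2k}$ case: undirected homomorphism counting is blind to arc orientations, and for an even target length the clean ``monotone closed walk $=$ cycle'' mechanism that makes the $C_{2k+1}$ case immediate has an unavoidable balanced-walk loophole, forcing the two-parameter weighting plus the precise matrix/color computation showing that no junk term survives in the $\lambda^{1}\mu^{0}$ coefficient — verifying that cancellation is the delicate step.
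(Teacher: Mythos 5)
Your reduction is correct, but it isolates the directed-cycle ``signal'' by a genuinely different mechanism than the paper. The paper subdivides every edge of the color-coded graph once, views the result as a $2k$-partite $2$-degenerate graph with a circular partition $\mathcal{P}$, and extracts exactly the winding homomorphisms via the inclusion--exclusion of Lemma \ref{lem:transversal}: $2^{2k}$ oracle calls compute the number of $\mathcal{P}$-cycle transversals, which correspond precisely to directed $k$-cycles; the odd target $C_{2k+1}$ is handled by the very same lemma after lengthening one subdivision path. You instead treat the two parities by two different devices: for $C_{2k+1}$ you exploit the fact that with labels in $\mathbb{Z}_{2k+1}$ and unit-increment edges every closed walk of odd length $2k+1$ must wind monotonically, so a single oracle call per coloring suffices (one correction: the increments sum to $0$ only modulo $2k+1$, not in $\mathbb{Z}$; since a sum of $2k+1$ terms $\pm1$ is odd and has absolute value at most $2k+1$, it equals $\pm(2k+1)$, which is what forces all increments equal --- your phrase ``sum to $0$ in $\mathbb{Z}$'' is a slip, but the conclusion stands); for $C_{2k}$ you replace inclusion--exclusion by gadget multiplicities ($\lambda$ parallel length-$2$ paths on wrap edges, pendant leaves scaling degrees by $\mu$) together with bivariate interpolation from $(k+1)^2$ oracle calls, and your identification of the $\lambda^{1}\mu^{0}$ coefficient is sound: the bipartite structure gives $\hom(C_{2k},H_{\lambda,\mu})=2\,\mathrm{trace}\bigl(M^{k}\bigr)$ with $M=A^{\mathrm{nw}}+\lambda A^{\mathrm{w}}+\mu D^{\mathrm{nw}}+\lambda\mu D^{\mathrm{w}}$, and the integer-color winding argument (one step of $\mp(k-1)$ plus $k-1$ steps of $\pm1$ summing to zero) indeed kills all junk, yielding $4\,\mathrm{trace}\bigl(A_{G'}^{k}\bigr)$. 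Both routes give the claimed $O(m^\alpha)$ randomized and $O(m^\alpha\log m)$ deterministic bounds with $O_k(1)$ calls per coloring on $O_k(m)$-vertex $2$-degenerate graphs; the paper's transversal lemma is more uniform (one mechanism for both parities, reused again in the proof of Theorem \ref{thm:detect-degeneracy}), while your scheme needs only one call in the odd case and, in the even case, recovers the exact quantity $\mathrm{trace}\bigl(A_{G'}^{k}\bigr)$ rather than a mere detection bit.
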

	
	To make the relation between  directed-cycle detection in general graphs and counting cycle homomorphisms in bounded-degeneracy graphs even sharper, recall that for all $k \ge 3$, the presently fastest $k$-cycle detection algorithm in $m$-edge directed graphs is either the $O(m^{2-1/\lceil k/2 \rceil})$ algorithm \cite{AYZ} or the $\tilde{O}(m^{c_k})$ algorithm \cite{DVW,YZ}. Let us therefore define for a given $2 \le \omega < 2.373$
	$$
	d_k = \min\{c_k \,,\,2-1/{\lceil k/2 \rceil}\}\;.
	$$
	Recall that if $\omega > 2$ then for all large enough $k$ we have $d_k = 2-1/{\lceil k/2 \rceil}$ (i.e., the second term is smaller), while if $\omega = 2$ then $d_k = c_k$ for all $k \geq 3$.
	The following corollary immediately follows from Theorem \ref{thm:main} and Theorem \ref{thm:count-vs-detect}.
	\begin{corollary}\label{cor:of-thm-count-vs-detect}
		Let $k \ge 3$.
		If the fastest algorithm (in terms of the number of edges $m$) for deciding if a directed graph has a directed $k$-cycle runs in $\tilde{O}(m^{d_k})$ time then the fastest algorithm for computing  $\textsc{hom-cnt}_{C_{2k}}$ in degenerate graphs runs in $\tilde{O}(n^{d_k})$ time and the fastest algorithm for computing  $\textsc{hom-cnt}_{C_{2k+1}}$ in degenerate graphs runs in $\tilde{O}(n^{d_k})$ time.
	\end{corollary}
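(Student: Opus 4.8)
The plan is to derive the corollary mechanically from Theorem~\ref{thm:main} and Theorem~\ref{thm:count-vs-detect}, treating its two halves — that an $\tilde O(n^{d_k})$ counting algorithm exists, and that nothing faster does — separately. Throughout, I read the phrase ``the fastest algorithm runs in $\tilde O(m^{d_k})$ time'' as the conjunction of two statements: (a) there is a directed $k$-cycle detection algorithm running in $\tilde O(m^{d_k})$ time, and (b) there is no such algorithm running in time $O(m^\alpha)$ for any $\alpha < d_k$. The hypothesis of the corollary supplies (b), while (a) is already known from \cite{AYZ,YZ,DVW}, since $d_k=\min\{c_k,\,2-1/\lceil k/2\rceil\}$ and algorithms attaining each of the two exponents exist.

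For the upper-bound half I would simply invoke Theorem~\ref{thm:main}. Part~(i) gives an $\tilde O(n^{c_k})$ algorithm for $\textsc{hom-cnt}_{C_{2k}}$ and $\textsc{hom-cnt}_{C_{2k+1}}$ in bounded-degeneracy graphs, and part~(ii) gives an $\tilde O(n^{2-1/\lceil k/2\rceil})$ one; running whichever of the two is faster for the relevant value of $\omega$ yields an $\tilde O(n^{\min\{c_k,\,2-1/\lceil k/2\rceil\}})=\tilde O(n^{d_k})$ algorithm. Hence the fastest counting algorithm runs in time at most $\tilde O(n^{d_k})$.

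For the matching lower-bound half I would argue by contradiction. Suppose that for some $\alpha<d_k$ there is an algorithm computing $\textsc{hom-cnt}_{C_{2k}}$ in bounded-degeneracy graphs — in particular in $2$-degenerate graphs — in $O(n^\alpha)$ time. Theorem~\ref{thm:count-vs-detect} then produces an algorithm deciding whether an $m$-edge digraph contains a directed $k$-cycle in time $O(m^\alpha)$ (randomized) or $O(m^\alpha\log m)$ (deterministic). Choosing any $\alpha'$ with $\alpha<\alpha'<d_k$, both bounds are $O(m^{\alpha'})$, which is asymptotically faster than $\tilde O(m^{d_k})$ and therefore contradicts statement~(b) above. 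The same reasoning applies verbatim with $C_{2k+1}$ in place of $C_{2k}$. Combining the two halves, the fastest algorithm for $\textsc{hom-cnt}_{C_{2k}}$ (resp.\ $\textsc{hom-cnt}_{C_{2k+1}}$) in degenerate graphs runs in $\tilde O(n^{d_k})$ time.

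Since both ingredients are already proved, I do not expect a genuine mathematical obstacle; the only point requiring care is the bookkeeping around asymptotic notation — namely that the extra $\log m$ factor from the deterministic reduction in Theorem~\ref{thm:count-vs-detect}, and the polylogarithmic factors hidden inside the $\tilde O(\cdot)$'s of Theorem~\ref{thm:main}, do not affect any conclusion. This is immediate once one passes to a strictly smaller exponent $\alpha'<d_k$ (or simply absorbs such factors into the $\tilde O$), so beyond making the meaning of ``fastest algorithm'' precise, nothing further is needed.
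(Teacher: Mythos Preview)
Your proposal is correct and matches the paper's approach: the paper simply states that the corollary ``immediately follows from Theorem~\ref{thm:main} and Theorem~\ref{thm:count-vs-detect}'' without further elaboration, and your argument spells out precisely this derivation, handling the upper bound via Theorem~\ref{thm:main} and the lower bound via the contrapositive of Theorem~\ref{thm:count-vs-detect}. Your care with the polylogarithmic bookkeeping (introducing an intermediate exponent $\alpha'$) is more than the paper provides but entirely appropriate.
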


	In particular, for every $\varepsilon > 0$, if we 
	can improve the bound in Theorem \ref{thm:main} and compute, say, $\textsc{hom-cnt}_{C_{2k}}$
	in time $O(n^{d_k-\varepsilon})$, then we could decide if a digraph with $m$ edges has a directed $k$-cycle in time $O(m^{d_k-\varepsilon})$, which is faster (in terms of $m$) than the presently fastest known algorithm for the latter problem.
	
	Another interesting by-product of Theorem \ref{thm:main} is that it can be used (together with a standard application of the color-coding method) to give an algorithm for (directed or undirected) $k$-cycle detection
	in bounded-degeneracy graphs.
	\begin{theorem}\label{thm:detect-degeneracy}
		Let $k \ge 6$. There is an algorithm that detects if a (directed) bounded-degeneracy graph has a (directed) $k$-cycle in $\tilde{O}(n^{d_{\lfloor k/2 \rfloor}})$ time. 
	\end{theorem}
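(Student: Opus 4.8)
The plan is to combine Theorem~\ref{thm:main} with the color-coding method. Write $\ell=\lfloor k/2\rfloor$, so that $k\in\{2\ell,2\ell+1\}$ and the target running time is $\tilde O(n^{d_\ell})$. I would first treat the undirected case. Let $G$ be the input graph, which we may assume is $d$-degenerate for a constant $d$, and color $V(G)$ uniformly at random by $c\colon V(G)\to[k]$. The key (trivial) observation is that a homomorphism $\varphi\colon C_k\to G$ whose image uses all $k$ colors must be injective: the image then uses all $k$ colors yet has at most $k$ vertices, hence has exactly $k$ distinctly colored vertices and $\varphi$ is a bijection onto it. So such a $\varphi$ is a $k$-cycle of $G$ on $k$ distinctly colored vertices, and conversely each such ``colorful'' $k$-cycle of $G$ is the image of exactly $2k$ homomorphisms $C_k\to G$ that use all $k$ colors. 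It therefore suffices to compute the number $N$ of $C_k$-homomorphisms into $G$ using all $k$ colors, since $N>0$ iff $G$ has a colorful $k$-cycle. For $S\subseteq[k]$ let $G_S$ be the subgraph of $G$ induced by the vertices with color in $S$; then $\textsc{hom-cnt}_{C_k}(G_S)$ counts the $C_k$-homomorphisms into $G$ that avoid all colors outside $S$, so by inclusion--exclusion
\[
N=\sum_{S\subseteq[k]}(-1)^{k-|S|}\,\textsc{hom-cnt}_{C_k}(G_S).
\]
Each $G_S$ is an induced subgraph of $G$, hence $d$-degenerate, so by Theorem~\ref{thm:main} (invoked with parameter $\ell$, using whichever of parts (i), (ii) is faster) each of the $2^k=O(1)$ terms is computed in time $\tilde O(n^{d_\ell})$. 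Finally, if $G$ has any $k$-cycle then a uniformly random coloring makes it colorful with probability at least $k!/k^k>0$, a positive constant, so $O(\log n)$ independent repetitions detect a $k$-cycle with high probability; replacing the random colorings by an $(n,k)$-perfect family of hash functions (of size $2^{O(k)}\log n=O(\log n)$, as $k$ is fixed) derandomizes this. In all cases the running time is $\tilde O(n^{d_\ell})=\tilde O(n^{d_{\lfloor k/2\rfloor}})$.

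For the directed case I would reduce to the undirected homomorphism-counting problem by encoding orientations into the coloring. Let $\vec G$ be a directed graph whose underlying undirected graph is $d$-degenerate, color $V(\vec G)$ by $c\colon V(\vec G)\to[k]$ as above, and build an \emph{undirected} graph $G'$ on $V(\vec G)$ by adding the edge $\{u,v\}$ exactly when $\vec G$ contains an arc $u\to v$ with $c(v)\equiv c(u)+1\pmod k$. Since $k\ge 6$, no edge of $G'$ arises from two opposite arcs, so each edge of $G'$ lifts to a unique arc of $\vec G$; also $G'$ is $d$-degenerate, being a subgraph of the underlying graph of $\vec G$. As before, a $C_k$-homomorphism into $G'$ using all $k$ colors is injective, and, read in the direction of increasing color, its image is a directed $k$-cycle of $\vec G$ on $k$ distinctly colored vertices; conversely each such colorful directed $k$-cycle is the image of exactly $2k$ homomorphisms of $C_k$ into $G'$ that use all $k$ colors. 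Running the inclusion--exclusion computation of $N$ above on $G'$ thus detects a colorful directed $k$-cycle; here a random coloring makes a fixed directed $k$-cycle colorful in the required cyclic pattern with probability at least $k^{1-k}>0$, and derandomization uses an $(n,k)$-perfect hash family refined by all $k!$ permutations of $[k]$. This again runs in time $\tilde O(n^{d_{\lfloor k/2\rfloor}})$.

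The argument is essentially routine; the only points that need a moment's care are the two structural facts that a color-hitting $C_k$-homomorphism is automatically a (directed) $k$-cycle, and that --- in the directed case --- orienting each retained edge of $G'$ towards the larger color faithfully encodes the directed detection problem as an undirected homomorphism-counting problem. The running-time bound is then immediate, once one observes that detecting a $k$-cycle only requires counting $C_k$-homomorphisms with $k\in\{2\ell,2\ell+1\}$, i.e.\ with second index $\lfloor k/2\rfloor=\ell$, which is exactly the regime covered by Theorem~\ref{thm:main}.
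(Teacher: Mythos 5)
Your proposal is correct and follows essentially the same route as the paper: color-coding plus inclusion--exclusion over color classes to count colorful $C_k$-homomorphisms (which the paper packages as Lemma~\ref{lem:transversal} on cycle transversals), invoking Theorem~\ref{thm:main} with parameter $\lfloor k/2\rfloor$ on each induced subgraph, and derandomizing via perfect hash families. Your explicit treatment of the directed case (keeping only arcs whose colors increase by one and forgetting orientations) is exactly the construction the paper leaves implicit when it says the directed proof is ``similar.''
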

	As can easily be verified by examining the values of $c_k$ (hence $d_k$), the algorithm of Theorem
	\ref{thm:detect-degeneracy} is faster than any known algorithm for cycle detection in degenerate graphs for $7 \le k \le 11$.
	If $\omega = 2$, it is faster for all $k \ge 7$. In any case it is never slower (up to polylogarithmic factors) than the previously fastest algorithm for this problem, given in \cite{AYZ}, Theorem 4.2.
	
	Let us give some more details of the proof of Theorem \ref{thm:main}. For (di)graphs $G,H$, let us denote by $\hom(H,G)$ the number of homomorphisms from $H$ to $G$. Suppose that we want to compute $\hom(C_{\ell},G)$ for a given $d$-degenerate input graph $G$. As is usual when working with degenerate graphs, we approach this task by first finding a degeneracy ordering\footnote{A degeneracy ordering of a graph can be found in time linear in its number of edges \cite{MB}. In particular, since a bounded-degeneracy graph has $O(n)$ edges, a degeneracy ordering of
		such a graph can be found in time $O(n)$.} $v_1,\dots,v_n$ of $G$ --- i.e., a vertex ordering in which each vertex $v_i$ has at most $d$ neighbors succeeding it --- and then considering the orientation $\vec{G}$ of $G$ in which all edges are oriented forward with respect to this ordering, namely, from $v_i$ to $v_j$ for $\{v_i,v_j\} \in E(G)$ with $i < j$. Observe that $\vec{G}$ can be obtained in linear time (in $|E(G)|=O(dn)$), that it is acyclic, and that all of its vertices have out-degree at most $d$. 
	Moreover, it is not hard to see that 
	$\hom(C_\ell,G) = \sum_{\vec{H}}{\hom(\vec{H},\vec{G})}$, where $\vec{H}$ runs over all acyclic orientations of $C_{\ell}$. It follows that counting $C_{\ell}$-homomorphisms in $d$-degenerate graphs reduces to counting homomorphisms of (all) acyclic orientations of $C_{\ell}$ in directed
	acyclic graphs (DAGs) of maximum out-degree $d$. 
	
	As it turns out, there is a particular orientation of $C_{\ell}$ which is harder, in a sense, than all others; this is the orientation with the maximum number of sources, namely, the orientation where edges are oriented in an alternating fashion (with the exception of one edge in the case that $\ell$ is odd). We call this the {\em alternating orientation} of $C_{\ell}$. To establish that this is indeed the ``hardest orientation", we show (roughly speaking) that for every DAG $\vec{H}$ and for every {\em directed subdivision} $\vec{H}'$ of $\vec{H}$, counting $\vec{H}'$-homomorphisms can be reduced to counting $\vec{H}$-homomorphisms (in input DAGs of bounded maximum degree). It is easy to see that every acyclic orientation $\vec{H}$ of $C_{\ell}$ (with at least two sources) is a directed subdivision of the alternating orientation of $C_{\ell'}$ for some even $\ell' \leq \ell$, and that $\ell' = \ell$ if and only if $\vec{H}$ itself is alternating\footnote{Indeed, one can observe that if the number of sources in $\vec{H}$ is $p > 1$, then $\vec{H}$ is a directed subdivision of the alternating orientation of $C_{2p}$. 
		If $p = 1$ then we can think of $C_{\ell}$ as the subdivision of the 2-vertex alternating cycle $C_2$, which is the multigraph with two vertices $x,y$ and two parallel edges from $x$ to $y$. Theorem \ref{thm:reduction-subdivision_main} applies to $\vec{H} = C_2$ as well. 
		A different way to handle the case $p = 1$ is to observe that in this case
		one can easily count homomorphisms of $\vec{H}$ in linear time \cite{Bressan}.
	}. 
	It follows that the running time of counting $C_{\ell}$-homomorphisms is dominated by the running time of counting homomorphisms of alternating orientations of cycles of even length at most $\ell$. 
	
	The aforementioned general reduction for directed subdivisions is stated in the following theorem. For technical reasons, the reduction reduces the problem of counting $\vec{H'}$-homomorphisms to the problem of counting $\vec{H}$-homomorphisms in {\em weighted digraphs}. We denote by $\textsc{w-hom-cnt}_{\vec{H}}$ the weighted analogue of $\textsc{hom-cnt}_{\vec{H}}$; see Section \ref{sec:hom-and-subdiv} for the precise definition. 
	
	
	
	\begin{theorem}\label{thm:reduction-subdivision_main}
		Let $\vec{H}$ be a DAG and let $\vec{H'}$ be a directed subdivision of $\vec{H}$. If $\textsc{w-hom-cnt}_{\vec{H}}$ can be solved in time $O(n^\alpha)$ in $n$-vertex weighted degenerate digraphs\footnote{Here, and in what follows, by degenerate digraphs we mean digraphs where all out-degrees are bounded.}, then $\textsc{hom-cnt}_{\vec{H}'}$ can be solved in time $O(n^\alpha)$ in $n$-vertex degenerate digraphs.
	\end{theorem}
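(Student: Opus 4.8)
The plan is to express $\hom(\vec{H'},\vec{G})$ through a clean decomposition of homomorphisms across the subdivided paths, and then recognize the resulting expression as a weighted homomorphism count of $\vec{H}$ into a suitable weighted digraph obtained from $\vec{G}$.

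\textbf{Decomposition.} Write $\vec{H'}$ as a directed subdivision of $\vec{H}$: each edge $e=(u,v)\in E(\vec{H})$ is replaced by an internally vertex-disjoint directed path $P_e$ from $u$ to $v$ of some length $\ell_e\ge 1$, and $V(\vec{H'})$ consists of the vertices of $\vec{H}$ (the \emph{branch vertices}) together with the $\ell_e-1$ internal vertices of each $P_e$. The key observation is that a homomorphism $\phi\colon\vec{H'}\to\vec{G}$ is in bijection with the following independent choices: an arbitrary map $\psi\colon V(\vec{H})\to V(\vec{G})$ (the restriction of $\phi$ to the branch vertices), and, for every $e=(u,v)\in E(\vec{H})$, a directed walk of length $\ell_e$ in $\vec{G}$ from $\psi(u)$ to $\psi(v)$ (the image under $\phi$ of $P_e$). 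Letting $W_\ell(x,y)$ denote the number of directed walks of length $\ell$ from $x$ to $y$ in $\vec{G}$, this yields the identity
\[
\hom(\vec{H'},\vec{G})\;=\;\sum_{\psi\colon V(\vec{H})\to V(\vec{G})}\;\prod_{e=(u,v)\in E(\vec{H})}W_{\ell_e}\bigl(\psi(u),\psi(v)\bigr),
\]
which remains valid verbatim when $\vec{H}$ has parallel edges (the product is then taken over edges with multiplicity), covering in particular the degenerate case $\vec{H}=C_2$ mentioned above.

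\textbf{From the identity to a weighted count.} The right-hand side is exactly $\textsc{w-hom-cnt}_{\vec{H}}$ evaluated on the weighted digraph $\vec{G}'$ on vertex set $V(\vec{G})$ in which each edge $e\in E(\vec{H})$ carries the weight function $(x,y)\mapsto W_{\ell_e}(x,y)$; here I use that the definition of $\textsc{w-hom-cnt}$ in Section~\ref{sec:hom-and-subdiv} permits a separate weight function for each edge of the pattern, which is needed precisely because distinct edges of $\vec{H}$ may be subdivided into paths of different lengths. I would then verify that $\vec{G}'$ is a valid input for the assumed algorithm: if every out-degree of $\vec{G}$ is at most $d$, then at most $d^{\ell}$ directed walks of length $\ell$ leave any given vertex, so for each fixed $e$ the set $\{y:W_{\ell_e}(x,y)>0\}$ has size at most $d^{\ell_e}\le d^{|V(\vec{H'})|}$; hence $\vec{G}'$ has out-degree bounded in terms of $d$ and $\vec{H'}$ only, i.e.\ it is a weighted degenerate digraph. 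Moreover $\vec{G}'$ can be built in time $O(n\cdot\mathrm{poly}(d))=O(n)$ by computing, for each vertex $x$ and each $1\le\ell\le\max_e\ell_e$, the sparse vector $W_\ell(x,\cdot)$ via the recurrence $W_{\ell+1}(x,y)=\sum_{z:(z,y)\in E(\vec{G})}W_\ell(x,z)$ (equivalently, a bounded-depth walk enumeration from each vertex).

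\textbf{Conclusion and main difficulty.} It then suffices to run the assumed $O(n^\alpha)$-time algorithm for $\textsc{w-hom-cnt}_{\vec{H}}$ once, on the $n$-vertex weighted degenerate digraph $\vec{G}'$, and return its output; the total running time is $O(n)+O(n^\alpha)=O(n^\alpha)$, and the reduction is deterministic. There is no genuine obstacle here: the only points that require care are bookkeeping ones, namely aligning the construction with the precise weighted-homomorphism model (in particular the per-edge weight functions) and checking that the constructed instance $\vec{G}'$ remains within the class of degenerate digraphs so that the hypothesis applies — both addressed above. Everything else is immediate from the displayed identity.
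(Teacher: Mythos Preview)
Your decomposition identity is correct, but the step ``From the identity to a weighted count'' contains a genuine gap. You explicitly assume that the definition of $\textsc{w-hom-cnt}_{\vec{H}}$ ``permits a separate weight function for each edge of the pattern.'' It does not: in Definition~\ref{def:weight-hom} the weight function is a single map $w_{\vec{G}}\colon E(\vec{G})\to\mathbb{R}_{\ge 0}$ on the edges of the \emph{host} digraph, and the hypothesis of Theorem~\ref{thm:reduction-subdivision_main} is stated for this model. Your construction, however, needs the weight of the pair $(\psi(u),\psi(v))$ to depend on which pattern edge $e=(u,v)$ it realizes (namely on $\ell_e$), and different edges of $\vec{H}$ may well have different $\ell_e$. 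Hence your $\vec{G}'$ is not an instance of the problem you are assumed to be able to solve, and the reduction does not go through as written.

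This is exactly the difficulty the paper's proof is designed to overcome. Rather than isolating a single subdivision $\vec{H}'$, the paper uses the \emph{single} weight $w(x,y)=\sum_{\ell=1}^{p}w_\ell(x,y)$ (so a legitimate instance of $\textsc{w-hom-cnt}_{\vec{H}}$), observes that the resulting weighted count equals $\sum_{\vec{H}'\in SD_{\vec{H},p}}\hom(\vec{H}',\vec{G})$, and then invokes Lemma~\ref{lem:hom-linear-combination} (the Curticapean--Dell--Marx principle) to extract each individual $\hom(\vec{H}',\vec{G})$ from this linear combination. The appeal to Lemma~\ref{lem:hom-linear-combination} is not cosmetic: it is precisely what lets one pass from a single-weight computation to information about a specific subdivision sequence $(\ell_e)_e$. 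Your argument would be complete only under the stronger per-pattern-edge weighting hypothesis, which is not what is assumed.
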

	We believe Theorem \ref{thm:reduction-subdivision_main} to be of independent interest. 
	With this theorem at hand, it remains to give an efficient algorithm which counts homomorphisms of the alternating orientation of $C_{2k}$ (in DAGs of bounded maximum out-degree) for every $k \geq 3$, which is the main step towards proving Theorem \ref{thm:main}. 
	Since applying Theorem \ref{thm:reduction-subdivision_main} requires that the algorithm works in the more general weighted setting, we need to modify it accordingly (this modification is straightforward and does not pose additional difficulties).
	
	\subsection{Related Work}\label{subsec:related_work}
	
	While the fastest algorithm for $k$-cycle detection in directed $m$-edge graphs runs in $\tilde{O}(m^{d_k})$ time, it is worth noting that there are other algorithms expressed in terms of both $n$ and $m$ (or $n$ alone).
	Indeed, the color-coding method \cite{AYZ_ColorCoding} shows that a simple 
	directed or undirected cycle of size $k$ in a directed or undirected graph can be detected in
	either $\tilde{O}(nm)$ time or $\tilde{O}(n^\omega)$ time. For dense graphs, this is faster than the
	$\tilde{O}(m^{d_k})$ time algorithm.
	Eisenbrand and Grandoni \cite{EG} proved that a directed $C_4$ can be detected in time $O(m^{2-2/\omega}n^{1/\omega})$.
	This algorithm is inferior to the $\tilde{O}(m^{c_k})$ algorithm for sparse graphs
	and inferior to the $O(n^\omega)$ algorithm for dense graphs, but is better than both in some intermediate
	\nolinebreak range.
	
	As for hardness,  Lincoln et al. \cite{L-VW-W} proved conditional lower bounds for $k$-cycle detection.
	Under a widely-believed assumption, $K_k$ cannot be detected faster than $O(C(n,k))$ time, where
	$C(n,k) = M(n^{\lceil k/3 \rceil},n^{\lfloor k/3 \rfloor},n^{\lceil (k-1)/3 \rceil})$ and
	where $M(a,b,c)$ is the fastest known runtime for multiplying an $a \times b$  by a $b \times c$ matrix.
	Assuming this, they proved that detecting a directed $k$-cycle in an $m$-edge graph requires
	$m^{\frac{2\omega k}{3(k+1)}-o(1)}$ time. The same reduction also shows that any {\em combinatorial} algorithm for detecting a directed $k$-cycle in an $m$-edge graph requires $m^{2-1/\lceil k/2 \rceil - o(1)}$ time, in this case under the suitable 
	hardness hypothesis that any combinatorial algorithm for $K_k$-detection requires $n^{k-o(1)}$ time. This lower bound of $m^{2-1/\lceil k/2 \rceil - o(1)}$ coincides with the runtime in \eqref{eq:combinatorial_cycle_detect_runtime}, showing that the algorithms of \cite{AYZ,L-VW-W,VW-W-Y}, as well as our combinatorial algorithm for directed-cycle detection given in Section \ref{sec:general_graphs}, are optimal (for combinatorial algorithms). The same applies to the algorithm given by Part (ii) \nolinebreak of \nolinebreak Theorem \nolinebreak \ref{thm:main}.
	
	\paragraph{Paper Overview:}
	The goal of Section \ref{sec:prelim} is to introduce the definitions we will use in subsequent sections. Section \ref{sec:hom-and-subdiv} is devoted to proving Theorem \ref{thm:reduction-subdivision_main}. 
	Theorem \ref{thm:main} is proved in Sections \ref{sec:comb} and \ref{sec:mat-mult}: the former deals with Part (ii) of the theorem and the latter with Part (i). Section \ref{sec:count-vs-detect} contains the proofs of Theorems \ref{thm:count-vs-detect} and \ref{thm:detect-degeneracy}. 
	Finally, in Section \ref{sec:general_graphs} we apply our methods to obtain combinatorial algorithms for counting $C_k$-homomorphisms and for $C_k$-detection in general graphs (both directed and undirected).  
	
	\section{Preliminaries}\label{sec:prelim}
	Here we introduce the basic definitions to be used throughout the paper. Recall that for (undirected) graphs $G,H$, a homomorphism from $H$ to $G$ is a map $\varphi : V(H) \rightarrow V(G)$ such that $\{\varphi(u),\varphi(v)\} \in E(G)$ whenever $\{u,v\} \in E(H)$. Similarly, for digraphs $\vec{G},\vec{H}$, a homomorphism from $\vec{H}$ to $\vec{G}$ is a map $\varphi : V(\vec{H}) \rightarrow V(\vec{G})$ such that $(\varphi(u),\varphi(v)) \in E(\vec{G})$ whenever $(u,v) \in E(\vec{H})$.
	We will consider (di)graphs with edge weights. Let us define what we mean by the {\em weight} of a homomorphism. 
	\begin{definition}\label{def:weight-hom}
		Let $\vec{H}$ be a digraph, let $\vec{G}$ be a weighted digraph with a weight function \linebreak $w_{\vec{G}}: E(\vec{G}) \rightarrow \mathbb{R}_{\geq 0}$, and let $\varphi: V(\vec{H}) \rightarrow V(\vec{G})$ be a homomorphism. The {\em weight} of $\varphi$, denoted $W(\varphi)$, is defined as follows:
		$$W(\varphi) := \prod_{(u,v) \in E(\vec{H})} w_{\vec{G}}((\varphi(u), \varphi(v))). $$
	\end{definition}
	
	For digraphs $\vec{G},\vec{H}$, denote by $\text{Hom}(\vec{H}, \vec{G})$ the set of all homomorphisms from $\vec{H}$ to $\vec{G}$. For a weight-function $w_{\vec{G}}$ on the edges of $\vec{G}$, define the {\em weighted homomorphism count} as follows:
	$$ \hom(\vec{H}, \vec{G}, w_{\vec{G}}) := \sum_{\varphi \in \text{Hom}(\vec{H}, \vec{G})} W(\varphi).$$
	As mentioned above, we denote by $\textsc{w-hom-cnt}_{\vec{H}}$ the problem of computing $\hom(\vec{H}, \vec{G}, w_{\vec{G}})$ for a given weighted input digraph $\vec{G}$. 
	(Usual --- i.e., unweighted --- homomorphism counts can be cast in this setting by letting all edge-weights be $1$.)
	
	\noindent
	Finally, we recall the notion of subdivision for directed graphs: 
	
	
	\begin{definition}\label{def:directed-graph-subdiv}
		Let $\vec{H}$ be a directed graph and write $E(\vec{H}) = \{e_1, \dots, e_t\}$. Let $(x_1, \dots, x_t)$ be a sequence of positive integers. The {\em directed $(x_1,\dots,x_t)$-subdivision} of $\vec{H}$ is the digraph obtained from $\vec{H}$ by replacing each edge $e_i$ with a directed path with $x_i$ edges, where paths replacing different edges are internally disjoint. We call $(x_1,\dots,x_t)$ the {\em subdivision sequence}. We say that a digraph $\vec{H}'$ is a {\em directed subdivision} of $\vec{H}$ if it is the directed $(x_1,\dots,x_t)$-subdivision of $\vec{H}$ for some $(x_1,\dots,x_t)$.  
	\end{definition}
	
	\section{Homomorphism Counting and Graph Subdivisions}\label{sec:hom-and-subdiv}
	
	In this section we prove Theorem \ref{thm:reduction-subdivision_main}, showing that counting (weighted) homomorphisms of a directed graph is at least as hard as counting homomorphisms of its directed subdivisions. The main tool in the proof is (a digraph variant of) an extremely useful result of Curticapean, Dell and Marx \cite{CDM}, stated below as Lemma \ref{lem:hom-linear-combination}. This lemma deals with computing {\em linear combinations of homomorphism counts}.  
	For digraphs 
	$\vec{H_1}, \dots, \vec{H_k}$ and non-zero constants $c_1, \dots ,c_k$, let 
	$\textsc{hom-cnt}_{c_1 \vec{H_1} + \dots + c_k \vec{H_k}}$ be the problem of computing 
	$c_1 \cdot \hom(\vec{H_1},\vec{G}) + \dots + c_k \cdot \hom(\vec{H_k},\vec{G})$ for an input digraph $\vec{G}$. Lemma \ref{lem:hom-linear-combination} states that solving 
	$\textsc{hom-cnt}_{c_1 \vec{H_1} + \dots + c_k \vec{H_k}}$ is essentially equivalent to solving $\textsc{hom-cnt}_{\vec{H_i}}$ for all $1 \leq i \leq k$. 

	\begin{lemma}\label{lem:hom-linear-combination}
		Let $\vec{H_1}, \dots, \vec{H_k}$ be pairwise non-isomorphic digraphs and let $c_1, \dots, c_k$ be non-zero constants. Then $\textsc{hom-cnt}_{c_1 \vec{H_1} + \dots + c_k \vec{H_k}}$ can be solved in time $O(n^\alpha)$ in $n$-vertex degenerate digraphs if and only if
		$\textsc{hom-cnt}_{\vec{H_i}}$ can be solved in time $O(n^\alpha)$ in $n$-vertex degenerate digraphs for all $1 \leq i \leq k$. 
	\end{lemma}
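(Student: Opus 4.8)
The plan is to mimic the Curticapean–Dell–Marx argument \cite{CDM} in the directed, degenerate setting, with the two directions handled separately. The ``if'' direction is immediate: if each $\textsc{hom-cnt}_{\vec{H_i}}$ can be solved in time $O(n^\alpha)$ on $n$-vertex degenerate digraphs, then given an input $\vec G$ we simply run the $k$ algorithms and output $\sum_i c_i \cdot \hom(\vec H_i, \vec G)$; this costs $O(k \cdot n^\alpha) = O(n^\alpha)$ since $k$ is a constant (it depends only on the fixed pattern). The substance is in the ``only if'' direction. Here the idea is the standard interpolation/tensor-power trick: from an algorithm $A$ computing the linear combination $f(\vec G) := \sum_{i=1}^k c_i \hom(\vec H_i,\vec G)$, we want to isolate an individual $\hom(\vec H_j,\vec G)$.

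The key algebraic fact is the behavior of homomorphism counts under the tensor (categorical) product of digraphs: $\hom(\vec H, \vec G_1 \times \vec G_2) = \hom(\vec H,\vec G_1)\cdot \hom(\vec H,\vec G_2)$, where $V(\vec G_1\times\vec G_2)=V(\vec G_1)\times V(\vec G_2)$ and $((a,b),(a',b'))$ is an edge iff both $(a,a')\in E(\vec G_1)$ and $(b,b')\in E(\vec G_2)$. First I would fix, for each pattern $\vec H_i$, a small ``distinguishing'' digraph $\vec F_i$ such that the vector $(\hom(\vec H_1,\vec F_i),\dots,\hom(\vec H_k,\vec F_i))$ is useful — concretely, one takes $\vec F_i = \vec H_i$ itself and uses that the matrix $M$ with entries $M_{ij} = \hom(\vec H_j,\vec H_i)$ is nonsingular. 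This nonsingularity is exactly the digraph analogue of the classical fact (for undirected graphs, due essentially to Lovász) that the ``hom matrix'' of pairwise non-isomorphic graphs is invertible; for digraphs it follows the same way, by ordering the $\vec H_i$ by number of vertices and then by number of edges and observing $M$ is block-triangular with nonzero diagonal (since $\hom(\vec H_i,\vec H_i)\ne 0$, as the identity is a homomorphism, and $\hom(\vec H_j,\vec H_i)=0$ whenever $\vec H_j$ is strictly ``larger'' than $\vec H_i$ in this order). Then, given the target input $\vec G$ on $n$ vertices, I form the $k$ product digraphs $\vec G \times \vec H_i$ for $i=1,\dots,k$; each has $O(n)$ vertices (since $|V(\vec H_i)|$ is a constant) and is still degenerate — crucially, the out-degree of $(a,b)$ in $\vec G\times\vec H_i$ is at most $\outdeg{a}\cdot\outdeg{b}\le d\cdot|V(\vec H_i)|$, so bounded out-degeneracy is preserved with the constant growing only by a factor depending on the fixed pattern. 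Running $A$ on each $\vec G\times\vec H_i$ yields, by multiplicativity of $\hom$ over $\times$,
\[
A(\vec G\times \vec H_i) \;=\; \sum_{j=1}^k c_j \,\hom(\vec H_j,\vec G)\,\hom(\vec H_j,\vec H_i),
\]
which is a system of $k$ linear equations in the $k$ unknowns $y_j := c_j \hom(\vec H_j,\vec G)$, with coefficient matrix $M$. Solving this constant-size nonsingular system (in $O(1)$ time, the matrix being fixed) recovers every $y_j$, hence every $\hom(\vec H_j,\vec G)$, in total time $O(k\cdot n^\alpha)=O(n^\alpha)$.

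The main obstacle — and the only place the directed/degenerate setting demands care beyond quoting \cite{CDM} — is verifying that the reduction stays inside the class of bounded-out-degeneracy digraphs and that multiplicativity of $\hom$ over the tensor product holds verbatim for digraphs; both are routine but must be checked, and the out-degeneracy bound above is the heart of it. One subtlety worth flagging: the paper's convention is that ``degenerate digraph'' means bounded out-degree with respect to the given ordering, so I should note that if $\vec G$ comes with an ordering certifying out-degree $\le d$, then $\vec G\times\vec H_i$ inherits an ordering (e.g.\ order $(a,b)$ lexicographically by the $\vec G$-order on $a$ then an arbitrary order on $b$, or just use that small out-degree alone already implies $O(1)$-degeneracy) certifying out-degree $\le d\cdot|V(\vec H_i)|$. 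A second subtlety: the lemma is stated for the unweighted problem $\textsc{hom-cnt}$, so no weights enter here; if one wanted the weighted analogue, the same tensor trick works provided weights are assigned multiplicatively to product edges, but that is not needed for this statement. With these checks in place the proof is complete. $\blacksquare$
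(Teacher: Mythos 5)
Your overall skeleton — the trivial ``if'' direction, tensoring the input with fixed pattern-dependent digraphs, multiplicativity of $\hom$ over the tensor product, preservation of bounded out-degree, and solving a constant-size linear system — is exactly the paper's argument. But there is a genuine gap at the single step that carries all the weight: your claim that one may take $\vec{F_i}=\vec{H_i}$ because the matrix $M_{ij}=\hom(\vec{H_j},\vec{H_i})$ is triangular (under ordering by number of vertices, then edges) with nonzero diagonal. The triangularity claim is false: homomorphisms need not be injective, so a ``larger'' digraph can have homomorphisms into a ``smaller'' one. For instance, the disjoint union of two directed edges (4 vertices, 2 edges) has a homomorphism onto a single directed edge, and more generally any $\vec{H_j}$ whose homomorphic image (core) is contained in $\vec{H_i}$ contributes a nonzero entry below your ``diagonal''. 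What is triangular under your ordering is the matrix of \emph{injective} homomorphism counts $\mathrm{inj}(\vec{H_j},\vec{H_i})$, not $\hom$. Consequently the invertibility of your coefficient matrix is unproven, and the interpolation step — recovering the $y_j=c_j\hom(\vec{H_j},\vec{G})$ from the $k$ runs of the assumed algorithm — may face a singular system.

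This is precisely the point where the paper does not take $\vec{F_i}=\vec{H_i}$ but instead invokes a directed version of the Erd\H{o}s--Lov\'asz--Spencer lemma (Lemma \ref{lem:hom-matrix-invertible}): each $\vec{H_i}$ is replaced by a blowup $\vec{F_i}$ with indeterminate class sizes, so that $\hom(\vec{H_i},\vec{F_j})$ becomes a polynomial whose \emph{multilinear part} counts exactly the injective homomorphisms from $\vec{H_i}$ to $\vec{H_j}$; the injective-count matrix is upper-triangular with nonzero diagonal under the $v+e$ ordering, hence the determinant is a nonzero polynomial, and some integer assignment of class sizes yields invertible $\vec{F_1},\dots,\vec{F_k}$. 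To repair your proof you either need this blowup argument (or another construction of suitable $\vec{F_i}$), or a correct proof that the matrix $\bigl(\hom(\vec{H_j},\vec{H_i})\bigr)_{i,j}$ is always nonsingular for pairwise non-isomorphic digraphs — which your size-ordering argument does not give. The remaining ingredients of your write-up (multiplicativity \eqref{eq:hom-multiplicative-for-tensor}, the out-degree bound for $\vec{F}\times\vec{G}$, and the $O(n^\alpha)$ accounting) are correct and coincide with the paper.
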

	
	The undirected version of Lemma \ref{lem:hom-linear-combination} was proved in \cite{CDM}, and subsequently used in \cite{Bera-Sesh-GLS} to study homomorphism-counting in degenerate graphs. The proof uses tensor products and a result of Erd\H{o}s, Lov\'{a}sz and Spencer \cite{ELS} regarding linear independence of homomorphism counts. Since the proof of the directed variant (namely, Lemma \ref{lem:hom-linear-combination}) is essentially the same as that of the undirected one, we postpone it to the appendix. 
	
	
	We are now in a position to prove Theorem \ref{thm:reduction-subdivision_main}.
	
	\begin{proof}[Proof of Theorem \ref{thm:reduction-subdivision_main}]
		Let $\vec{H}$ be a fixed digraph and write $E(\vec{H}) = \{e_1, \dots, e_t\}$.
		We assume that $\textsc{w-hom-cnt}_{\vec{H}}$ can be solved in time $O(n^\alpha)$ (for some $\alpha \geq 1$) in $n$-vertex weighted degenerate digraphs.	
		For an integer $p \geq 1$, we denote by $SD_{\vec{H}, p}$ the set of all directed subdivisions of $\vec{H}$ where each edge is replaced by a directed path of length at most $p$, and the original vertices (i.e., the vertices of $\vec{H}$) are labeled as in $\vec{H}$. We will prove the theorem simultaneously for all digraphs $\vec{H}' \in SD_{\vec{H}, p}$ (this is clearly sufficient as $p$ was arbitrary).
		Note that every digraph in $SD_{\vec{H}, p}$ is defined by a unique subdivision sequence $x=(x_1, \dots, x_t) \in [p]^t$. Our goal is to show that $\textsc{hom-cnt}_{\vec{H}'}$ can be solved in time $O(n^\alpha)$ in bounded-degeneracy digraphs for every digraph $\vec{H}' \in SD_{\vec{H}, p}$. To this end, we will show that a certain linear combination of $\hom(\vec{H}', \vec{G})$, where $\vec{H}'$ runs over all digraphs $\vec{H}' \in SD_{\vec{H}, p}$, can be computed in time $O(n^\alpha)$. We will then apply Lemma \ref{lem:hom-linear-combination} to complete the proof.
		
		Let $\vec{G}$ be an $n$-vertex digraph where all out-degrees are at most $\kappa = O(1)$. We construct a weighted digraph $\vec{F}$, that will have the same vertex set as $\vec{G}$, as follows. For every pair $x,y \in V(\vec{G})$ and every $\ell \in [p]$, let $w_\ell(x,y)$ denote the number of directed walks of length exactly $\ell$ from $x$ to $y$ in $\vec{G}$. Let 
		$$ w(x,y) = \sum_{\ell=1}^p w_\ell(x,y). $$ 
		If $w(x,y) > 0$, then the graph $\vec{F}$ will have the edge $(x,y)$ with weight $w(x,y)$, and otherwise (i.e., if $w(x,y) = 0$), it will not have the edge $(x,y)$. 
		Namely, the weight function of $\vec{F}$ is $w_{\vec{F}}(u,v) = w(u,v)$.
		We observe that the graph $\vec{F}$ can be constructed in $O(n)$ time. Indeed, for each $x \in V(\vec{G})$, there are at most $\kappa^\ell$ vertices $y$ such that there is a directed walk of length $\ell$ from $x$ to $y$, as all out-degrees in $\vec{G}$ are at most $\kappa$. Thus, for each $x \in V(\vec{G})$, it takes constant time to compute $w(x,y)$ for all $y$ such that $w(x,y) > 0$. Moreover, we observe that the out-degree of every vertex in $\vec{F}$ is constant. 
		
		Now, let $\varphi: V(\vec{H}) \rightarrow V(\vec{F})$ be a homomorphism. We have that 
		$$
		W(\varphi) = \prod_{(u,v) \in E(\vec{H})} w_{\vec{F}}((\varphi(u), \varphi(v))) = 
		\prod_{(u,v) \in E(\vec{H})} (w_1(\varphi(u), \varphi(v)) + \dots + w_p(\varphi(u), 	\varphi(v))),
		$$
		and so 
		\begin{equation}\label{eq:all-hom-weight-to-sum-weights}
		\hom(\vec{H}, \vec{F}, w_{\vec{F}}) =
		\sum_{\varphi \in \text{Hom}(\vec{H}, \vec{F})} \, \prod_{(u,v) \in E(\vec{H})} (w_1(\varphi(u), \varphi(v)) + \dots + w_p(\varphi(u), 	\varphi(v))).
		\end{equation}	
		Write $e_i = (u_i,v_i)$ for $1 \leq i \leq t$. 
		Crucially, observe that for a digraph $\vec{H}' \in SD_{\vec{H}, p}$ with subdivision sequence $(x_1, \dots, x_t) \in [p]^t$, 
		a homomorphism $\psi: V(\vec{H}') \rightarrow V(\vec{G})$ corresponds to a homomorphism $\varphi: V(\vec{H}) \rightarrow V(\vec{F})$ 
		such that $\varphi(v) = \psi(v)$ for every $v \in V(\vec{H})$, 
		together with a directed walk \linebreak
		$\psi(u_i) \rightarrow \psi(s_{e_i}^1) \rightarrow \dots \rightarrow \psi(s_{e_i}^{x_i-1}) \rightarrow \psi(v_i)$ 
		of length $x_i$ in $\vec{G}$ (where $\{s_{e_i}^j\} \in V(\vec{H}')$ are the subdivision vertices of the path corresponding to the edge $e_i$), for every $1 \leq i \leq t$. 
		Since for each $i$ the number of such walks (in $\vec{G}$) is $w_{x_i}(\varphi(u_i), \varphi(v_i))$, we have that 	
		
		\begin{equation}\label{eq:weight-prod-subdiv-seq}
		\hom(\vec{H}', \vec{G}) = \sum_{\varphi \in \text{Hom}(\vec{H}, \vec{F})} \,
		\prod_{e_i = (u_i, v_i) \in E(\vec{H})} w_{x_i}(\varphi(u_i), \varphi(v_i)).
		\end{equation}
		Recalling that every $\vec{H}' \in SD_{\vec{H}, p}$ is defined by a unique sequence $(x_1, \dots, x_t) \in [p]^t$, by combining \eqref{eq:all-hom-weight-to-sum-weights} and \eqref{eq:weight-prod-subdiv-seq}, we get
		\begin{align}\label{eq:hom-sum-all-in-SD}
		\sum_{\vec{H}' \in SD_{\vec{H}, p}} \hom(\vec{H}', \vec{G}) &=
		\sum_{\varphi \in \text{Hom}(\vec{H}, \vec{F})} \,
		\sum_{(x_1, \dots, x_t) \in [p]^t} \,
		\prod_{e_i = (u_i, v_i) \in E(\vec{H})} w_{x_i}(\varphi(u_i), \varphi(v_i)) \nonumber \\
		&= \sum_{\varphi \in \text{Hom}(\vec{H}, \vec{F})} \, \prod_{e_i = (u_i, v_i) \in E(\vec{H})} (w_1(\varphi(u_i), \varphi(v_i)) + \dots + w_p(\varphi(u_i), 	\varphi(v_i))) \\
		&= \hom(\vec{H}, \vec{F}, w_{\vec{F}}). \nonumber
		\end{align}
		
		We were able to express $\hom(\vec{H}, \vec{F}, w_{\vec{F}})$ as a linear combination of $\hom(\vec{H}', \vec{G})$ (with $\vec{H}' \in SD_{\vec{H}, p}$), where all of the coefficients are equal to 1. We observe that for isomorphic graphs $\vec{H}', \vec{H}'' \in SD_{\vec{H}, p}$, $\hom(\vec{H}', \vec{G}) = \hom(\vec{H}'', \vec{G})$, and thus we can ``combine like terms" in \eqref{eq:hom-sum-all-in-SD}. Namely, let $\vec{H_1}, \dots, \vec{H_k}$ be an enumeration of all graphs in $SD_{\vec{H}, p}$ {\em up to isomorphism} (that is, 
		$\vec{H_1}, \dots, \vec{H_k}$ are pairwise non-isomorphic). Then, there exist positive constants $c_1, \dots, c_k > 0$ such that 
		\begin{equation}\label{eq:hom-sum-non-isomorphic}
		\hom(\vec{H}, \vec{F}, w_{\vec{F}}) = \sum_{i=1}^k c_i \cdot \hom(\vec{H_i}, \vec{G}).
		\end{equation}
		
		As $\vec{F}$ is a weighted digraph where all out-degrees are constant, $\hom(\vec{H}, \vec{F}, w_{\vec{F}})$ can be computed in time $O(n^\alpha)$, by our assumption. Combined with \eqref{eq:hom-sum-non-isomorphic}, we get that $\textsc{hom-cnt}_{c_1 \vec{H_1} + \dots + c_k \vec{H_k}}$ can be solved in time $O(n^\alpha)$ in bounded-degeneracy digraphs.
		
		By Lemma \ref{lem:hom-linear-combination}, $\textsc{hom-cnt}_{\vec{H_i}}$ can be solved in time $O(n^\alpha)$ in degenerate digraphs, for each $1 \leq i \leq k$. 
		This completes the proof, since every digraph 
		$\vec{H}' \in SD_{\vec{H}, p}$ is isomorphic to one of $\vec{H_1}, \dots, \vec{H_k}$.
	\end{proof}
	
	
	\section{Counting Homomorphisms of Cycles: A Combinatorial Algorithm}\label{sec:comb}
	In  this section we prove the second part of Theorem \ref{thm:main}. We prefer to prove this part first as it is technically simpler than the first part.
	Let $P_r$ denote the oriented path with vertices $1,\dots,2r+1$ (so $e(P_r) = 2r$), and where the vertices $1, 3,\ldots,2r+1$ are sinks
	and the vertices $2, 4,\ldots,2r$ are sources. 
	\begin{lemma}\label{lem:path_counting}
		There is an algorithm which, given integers $r,\Delta \geq 1$ and a weighted degenerate $n$-vertex input DAG $\vec{G}$, computes the following:
		\begin{enumerate}
			\item For every $x,y \in V(\vec{G})$, the total weight $N_{r,\Delta}(x,y)$ of homomorphisms $\varphi: P_r \rightarrow \vec{G}$ such that $\varphi(1) = x$, $\varphi(2r+1) = y$, and $\textup{in-deg}(\varphi(t)) \leq \Delta$ for every sink $t$ of $P_r$ other than $1,2r+1$.   
			\item For every $x,y \in V(\vec{G})$ with $\textup{in-deg}(x) > \Delta$ and for every function $f : \{3,5,\dots,2r+1\} \rightarrow \{\textup{low, high}\}$, the total weight $M_{r,\Delta,f}(x,y)$ of homomorphisms $\varphi: P_r \rightarrow \vec{G}$ such that $\varphi(1) = x$ and $\varphi(2r+1) = y$, and 
			such that for every $t \in \{3,5,\dots,2r+1\}$ it holds that $\textup{in-deg}(\varphi(t)) \leq \Delta$ if $f(t) = \textup{low}$ and 
			$\textup{in-deg}(\varphi(t)) > \Delta$ if $f(t) = \textup{high}$.
		\end{enumerate} 
		The computation for Item 1 takes time $\tilde{O}(n\Delta^{r-1})$ and the computation for Item 2 takes time $\tilde{O}(n^2/\Delta)$. 
		Furthermore, the number of pairs $x,y$ with $N_{r,\Delta}(x,y) > 0$ is $O(n\Delta^{r-1})$. 
	\end{lemma}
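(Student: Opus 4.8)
Throughout let $\kappa=O(1)$ be an upper bound on the out-degrees of $\vec{G}$, so that $\vec{G}$ has $m=O(n)$ edges and $\sum_{v}\textup{in-deg}(v)=m$. Split $V(\vec{G})=V_{\mathrm{low}}\cup V_{\mathrm{high}}$ where $V_{\mathrm{high}}=\{v:\textup{in-deg}(v)>\Delta\}$; since each vertex of $V_{\mathrm{high}}$ is the head of more than $\Delta$ edges, $|V_{\mathrm{high}}|<m/\Delta=O(n/\Delta)$. In $O(n)$ preprocessing time we compute all in-degrees, all out-neighbourhoods (of total size $O(n)$) and all in-neighbourhoods.

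\emph{Item 1.} The plan is a depth-first enumeration of the relevant homomorphisms, accumulating weights in a dictionary keyed by the pair $(\varphi(1),\varphi(2r+1))$. We build $\varphi$ along $1,2,\dots,2r+1$: from $\varphi(2j-1)$ we branch to each $\varphi(2j)\in N^-(\varphi(2j-1))$, from $\varphi(2j)$ we branch to each $\varphi(2j+1)\in N^+(\varphi(2j))$, and we abandon a branch as soon as an interior sink $\varphi(t)$ (with $t\in\{3,5,\dots,2r-1\}$) is found to have in-degree larger than $\Delta$. The branching count is then: once $\varphi(1)=x$ is fixed, at most $\textup{in-deg}(x)$ choices for $\varphi(2)$; at most $\kappa$ choices for each of the $r$ forward steps $\varphi(2j)\to\varphi(2j+1)$; and at most $\Delta$ choices for each of the $r-1$ backward steps out of an interior sink, namely $\varphi(4),\varphi(6),\dots,\varphi(2r)$. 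Hence the number of homomorphisms with $\varphi(1)=x$ is $O(\textup{in-deg}(x)\cdot\Delta^{r-1})$, and summing over $x$ via $\sum_x\textup{in-deg}(x)=m=O(n)$ shows the total number of such homomorphisms --- hence also the number of pairs $x,y$ with $N_{r,\Delta}(x,y)>0$ --- is $O(n\Delta^{r-1})$. The same estimate bounds every one of the $2r+1=O(1)$ levels of the recursion tree by $O(\textup{in-deg}(x)\Delta^{r-1})$ partial homomorphisms (abandoned branches included), so the enumeration together with the $\mathrm{polylog}$-time dictionary updates runs in time $\tilde{O}(n\Delta^{r-1})$.

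\emph{Item 2.} Here the plan is a short dynamic program over the odd vertices of $P_r$, powered by a sparse ``two-step'' transition operator. For $\sigma,\sigma'\in\{\mathrm{low},\mathrm{high}\}$ define a weighted matrix $T_{\sigma,\sigma'}$ on $V(\vec{G})$ by
\[
T_{\sigma,\sigma'}(z,z')=\mathbf{1}[z\in V_\sigma]\cdot\mathbf{1}[z'\in V_{\sigma'}]\sum_{v:\,v\to z,\ v\to z'}w_{\vec{G}}(v,z)\,w_{\vec{G}}(v,z').
\]
The key point is that a vertex $v$ contributes a nonzero summand only when $z,z'\in N^+(v)$, hence contributes at most $\kappa^2$ nonzero entries in total over all four matrices; therefore $\sum_{\sigma,\sigma'}\mathrm{nnz}(T_{\sigma,\sigma'})=O(n)$, and all four matrices can be built (stored as adjacency lists) in $O(n)$ time by looping over $v$ and over ordered pairs from $N^+(v)$. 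Now fix $x\in V_{\mathrm{high}}$ and a function $f$, and set $f(1):=\mathrm{high}$. Specifying a homomorphism $\varphi:P_r\to\vec{G}$ with $\varphi(1)=x$, $\varphi(2r+1)=y$ and the prescribed in-degree types on $3,5,\dots,2r+1$ amounts to choosing images $\varphi(1),\varphi(3),\dots,\varphi(2r+1)$ with $\varphi(2j+1)\in V_{f(2j+1)}$, together with, for each $j$, a source image $\varphi(2j+2)\in N^-(\varphi(2j+1))\cap N^-(\varphi(2j+3))$ of weight $w_{\vec{G}}(\varphi(2j+2),\varphi(2j+1))\cdot w_{\vec{G}}(\varphi(2j+2),\varphi(2j+3))$; summing the latter over the source image turns the $j$-th ``two-step'' into the entry $T_{f(2j+1),f(2j+3)}(\varphi(2j+1),\varphi(2j+3))$. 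Consequently, letting $g_0$ be the indicator row vector of $x$ and $g_{j+1}=g_j\,T_{f(2j+1),f(2j+3)}$ for $j=0,\dots,r-1$, we obtain $M_{r,\Delta,f}(x,y)=g_r(y)$. Computing $g_jT$ costs $\tilde{O}(\mathrm{nnz}(T))=\tilde{O}(n)$ regardless of the support of $g_j$ --- for each $z$ in the support of $g_j$ we scan only the nonzero entries of row $z$ of $T$, and the row lengths sum to $\mathrm{nnz}(T)$ --- so the $r$ products cost $\tilde{O}(n)$ per pair $(x,f)$. Summing over the $O(n/\Delta)$ vertices $x\in V_{\mathrm{high}}$ and the $2^r=O(1)$ functions $f$ gives the bound $\tilde{O}(n^2/\Delta)$. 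Edge weights are propagated through both procedures in the obvious way.

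\emph{Main obstacle.} The one genuinely new ingredient is the sparsity analysis of the operators $T_{\sigma,\sigma'}$, combined with the elementary but crucial observation that a matrix with only $O(n)$ nonzero entries can be applied to an arbitrary, possibly dense, vector in $\tilde O(n)$ time; once this is in hand the $\tilde{O}(n^2/\Delta)$ bound for Item 2 is immediate. Everything else --- the branching count for Item 1, the correctness of the two-step decomposition, and the bookkeeping with dictionaries and (sparse) vectors --- is routine.
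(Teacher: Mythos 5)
Your proposal is correct: the counts, the branching bounds, and the time analysis all check out, and the statement's quantities are computed exactly as required. Item 1 is essentially the paper's argument (the paper roots the enumeration at the first source $\varphi(2)$, giving $n$ choices there and $O(1)$ for its two out-neighbours, while you root it at $\varphi(1)=x$ and charge the $\textup{in-deg}(x)$ choices for $\varphi(2)$ against $\sum_x \textup{in-deg}(x)=O(n)$; both give $O(n\Delta^{r-1})$ enumerated maps and hence the bound on the number of pairs with $N_{r,\Delta}(x,y)>0$). For Item 2 you take a genuinely different route: the paper proceeds by induction on $r$, extending every stored $M_{r-1,\Delta,g}(x,v)$ by enumerating a new source $u$ together with its $O(1)$ out-neighbour pairs $(v,y)$, which costs $O(n^2/\Delta)$ per level since there are $O(n/\Delta)$ high-in-degree vertices $x$ and $n$ choices of $u$. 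You instead precompute the four type-restricted ``cherry'' matrices $T_{\sigma,\sigma'}$ (each source contributes $O(1)$ entries, so $O(n)$ nonzeros in total — this is the same object as the table $\mathcal{HM}$ the paper only introduces later, in Section \ref{sec:mat-mult}) and, for each fixed $x$ with $\textup{in-deg}(x)>\Delta$ and each signature $f$, run a left-to-right vector DP $g_{j+1}=g_j T_{f(2j+1),f(2j+3)}$, using the observation that an $O(n)$-sparse matrix can be applied to an arbitrary vector in $\tilde{O}(n)$ time. The two computations are equivalent reorganizations of the same dynamic program along the path, and both yield $\tilde{O}(n^2/\Delta)$; your version makes the sparse matrix--vector step explicit and handles each high-degree endpoint $x$ independently, while the paper's induction computes the tables for all pairs $(x,y)$ and all signatures simultaneously, which matches the ``for every $x,y$'' phrasing of the statement directly and is the form reused verbatim in the proof of Theorem \ref{thm:comb}. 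One cosmetic remark: your convention $f(1):=\mathrm{high}$ is consistent with the lemma, since Item 2 only defines $M_{r,\Delta,f}(x,y)$ for $\textup{in-deg}(x)>\Delta$, so restricting the rows of the first transition matrix to high-in-degree vertices loses nothing.
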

	\begin{proof}
		We will use a hash map to store the counts $N_{r,\Delta}(x,y)$ and $M_{r,\Delta,f}(x,y)$ (for all pairs $(x,y)$ for which the counts are non-zero). 
		Using such a hash map is the reason for the logarithmic factors implicit in the $\tilde{O}$-notation in the runtime bound, since the hash-map operations require time $\tilde{O}(1)$. 
		This logarithmic factor can be avoided at the cost of allowing randomized algorithms (in which case we should speak of {\em expected} runtime).
		
		Let $w_{\vec{G}}$ be the weight function of $\vec{G}$. 
		We start with Item 1. Here we enumerate all homomorphisms 
		$\varphi: P_r \rightarrow \vec{G}$ such that $\textup{in-deg}(\varphi(t)) \leq \nolinebreak \Delta$ for every sink $t \in V(P_r)$ other than $1,2r+1$. Enumerating all such homomorphisms is clearly sufficient to produce the desired counts. To choose such a homomorphism $\varphi$, we first choose $\varphi(2)$, for which there are $n$ choices. Having chosen $\varphi(2)$, we have $O(1)$ choices for $\varphi(1)$ and $\varphi(3)$, where we only go over the choices for $\varphi(3)$ which satisfy $\textup{in-deg}(\varphi(3)) \leq \Delta$. Having chosen $\varphi(3)$, there are at most $\Delta$ choices for $\varphi(4)$, since $\varphi(4)$ must be an in-neighbor of $\varphi(3)$. Continuing in this fashion, we see that there are $O(1)$ choices for each of the vertices $\varphi(1),\varphi(3),\dots,\varphi(2r+1)$ and at most $\Delta$ choices for each of the vertices $\varphi(4),\varphi(6),\dots,\varphi(2r)$. Hence, the total number of choices is $n\Delta^{r-1}$. In particular, the number of pairs $x,y$ with $N_{r,\Delta}(x,y) > 0$ (namely, which are the endpoints in such a homomorphism) is at most $O(n\Delta^{r-1})$, as required.
		
		We now establish Item 2, whose proof is by induction on $r$. 
		First, let us handle the base case $r=1$. Observe that $P_1$ is just the two-edge oriented path with the middle vertex $2$ being a source and the two endpoints $1,3$ being sinks. We can enumerate all homomorphisms $\varphi$ from $P_1$ to $\vec{G}$ in time $O(n)$, by going over all (at most $n$) choices for $\varphi(2)$ and for each such choice, going over all (at most $O(1)$) pairs of out-neighbors of $\varphi(2)$. Enumerating all homomorphisms of $P_1$ is clearly enough to compute the desired counts.
		
		We now proceed to the induction step. Assume, by the induction hypothesis, that we have already computed the desired counts for $P_{r-1}$. To achieve this for $P_r$, go over all choices for vertices $x,u \in V(\vec{G})$ such that $\textup{in-deg}(x) > \Delta$. The number of such choices is at most $O(n^2/\Delta)$, since there are at most $n$ choices for $u$ and at most $O(n/\Delta)$ choices for $x$. Now go over all pairs $v,y$ of out-neighbors of $u$. For each such pair $v,y$, and for each function 
		$g : \{3,5,\dots,2r-1\} \rightarrow \{\textup{low, high}\}$, we have already computed the total weight $M_{r-1,\Delta,g}(x,v)$ of homomorphisms $\psi : P_{r-1} \rightarrow \vec{G}$ such that $\psi(1) = x$ and $\psi(2r-1) = v$, and such that for every $t \in \{3,5,\dots,2r-1\}$ it holds that $\textup{in-deg}(\varphi(t)) \leq \Delta$ if $g(t) = \textup{low}$ and 
		$\textup{in-deg}(\varphi(t)) > \Delta$ if $g(t) = \textup{high}$.
		Now, for each such $\psi$ and $g$, the map $\varphi : V(P_r) \rightarrow V(\vec{G})$ which agrees with $\psi$ on $P_r[\{1,\dots,2r-1\}] = P_{r-1}$ and satisfies $\varphi(2r) = u$ and $\varphi(2r+1) = y$, is a homomorphism from $P_r$ to $\vec{G}$ satisfying $\varphi(1) = x$ and $\varphi(2r+1) = y$. Furthermore, the function 
		$f : \{3,5,\dots,2r+1\} \rightarrow \{\textup{low, high}\}$ corresponding to $\varphi$ is simply the function which agrees with $g$ on $\{3,5,\dots,2r-1\}$ and satisfies $f(2r+1) = \textup{low}$ if $\textup{in-deg}(y) \leq \Delta$ and $f(2r+1) = \textup{high}$ if $\textup{in-deg}(y) > \Delta$. 
		Hence, multiplying the weight of $\psi$ by $w(u,v) \cdot w(u,y)$ and summing over all such $\psi$, we obtain the desired count for the pair $x,y$. 
	\end{proof}
	\noindent
	Let $\vec{C}_{2\ell}$ denote the alternating orientation of $C_{2\ell}$.
	\begin{theorem}\label{thm:comb}
		For every $\ell \geq 2$, there is an algorithm which, given a weighted degenerate $n$-vertex input DAG $\vec{G}$, computes the total weight of homomorphisms from $\vec{C}_{2\ell}$ to $\vec{G}$ in time $\tilde{O}(n^{2 - 1/\lceil \ell/2 \rceil})$. 
	\end{theorem}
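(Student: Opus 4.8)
\emph{Proof proposal.} The plan is to cut the alternating cycle $\vec{C}_{2\ell}$ into two oriented paths of the form $P_r$ and feed these into Lemma~\ref{lem:path_counting}, choosing the degree threshold so that its two items balance. Set $r=\lceil \ell/2\rceil$, $r'=\ell-r=\lfloor \ell/2\rfloor$, and $\Delta=\lceil n^{1/r}\rceil$. Label $V(\vec{C}_{2\ell})=\{u_1,\dots,u_{2\ell}\}$ in cyclic order with $u_1,u_3,\dots,u_{2\ell-1}$ the sinks (write $q_i=u_{2i-1}$) and $u_2,u_4,\dots,u_{2\ell}$ the sources. Call a vertex of $\vec{G}$ \emph{high} if its in-degree exceeds $\Delta$ and \emph{low} otherwise; since $\vec{G}$ has bounded out-degree, $\sum_v \textup{in-deg}(v)=|E(\vec{G})|=O(n)$, so there are only $O(n/\Delta)=O(n^{1-1/r})$ high vertices.

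First I would split the homomorphisms $\varphi\colon \vec{C}_{2\ell}\to \vec{G}$ by the low/high pattern of $\varphi(q_1),\dots,\varphi(q_\ell)$: either all are low, or there is a smallest index $j$ with $\varphi(q_j)$ high. This gives $\hom(\vec{C}_{2\ell},\vec{G})=\Lambda+\sum_{j=1}^{\ell}\Sigma_j$, where $\Lambda$ is the total weight of homomorphisms with all $q_i$ mapped low and $\Sigma_j$ the total weight of those with $\varphi(q_j)$ high and $\varphi(q_1),\dots,\varphi(q_{j-1})$ low. For $\Lambda$ I would cut the cycle at the two sinks $q_1=u_1$ and $q_{r+1}=u_{2r+1}$, splitting it into the arc $u_1,u_2,\dots,u_{2r+1}$ (a copy of $P_r$ with vertex $1$ at $q_1$) and the arc $u_{2r+1},\dots,u_{2\ell},u_1$ (a copy of $P_{r'}$ with vertex $1$ at $q_{r+1}$); a homomorphism with all $q_i$ low is then exactly a gluing, agreeing on $\{q_1,q_{r+1}\}$, of a $P_r$-homomorphism and a $P_{r'}$-homomorphism all of whose sinks land on low vertices, with multiplicative weight. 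Hence $\Lambda=\sum_{x,y\ \textup{low}}N_{r,\Delta}(x,y)\,N_{r',\Delta}(y,x)$. By Item~1 of Lemma~\ref{lem:path_counting}, $N_{r,\Delta}$ and $N_{r',\Delta}$ are computable in time $\tilde{O}(n\Delta^{r-1})$, and $N_{r',\Delta}$ has only $O(n\Delta^{r'-1})\le O(n\Delta^{r-1})$ nonzero pairs; iterating over those with hash-map look-ups into $N_{r,\Delta}$ computes $\Lambda$ in time $\tilde{O}(n\Delta^{r-1})=\tilde{O}(n^{2-1/r})$.

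For $\Sigma_j$ the key observation is that Item~2 of Lemma~\ref{lem:path_counting} runs in time $\tilde{O}(n^2/\Delta)$ \emph{regardless of the length of the path}, so here I would not cut into halves but instead unroll $\vec{C}_{2\ell}$ at the single sink $q_j$ into a copy of $P_\ell$ whose vertex $1$ and vertex $2\ell+1$ are both at $q_j$; under this identification the sinks at positions $3,5,\dots,2\ell-1$ are $q_{j+1},\dots,q_\ell,q_1,\dots,q_{j-1}$ in order. A homomorphism with $\varphi(q_j)=z$ is precisely a $P_\ell$-homomorphism with both endpoints equal to $z$, and it is counted in $\Sigma_j$ iff $z$ is high, the positions corresponding to $q_1,\dots,q_{j-1}$ are mapped low, and the remaining internal sinks are arbitrary. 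Letting $\mathcal{G}_j$ be the $O(1)$-size set of functions $g\colon\{3,5,\dots,2\ell+1\}\to\{\textup{low},\textup{high}\}$ with $g(2\ell+1)=\textup{high}$, $g\equiv\textup{low}$ on the positions of $q_1,\dots,q_{j-1}$, and $g$ free elsewhere, we get $\Sigma_j=\sum_{z\ \textup{high}}\sum_{g\in\mathcal{G}_j}M_{\ell,\Delta,g}(z,z)$. Computing all values $M_{\ell,\Delta,g}(x,y)$ once, via Item~2 applied with $r\leftarrow \ell$, costs $\tilde{O}(n^2/\Delta)$, after which each $\Sigma_j$ is $O(n/\Delta)$ look-ups; so all $\Sigma_j$'s together cost $\tilde{O}(n^2/\Delta)=\tilde{O}(n^{2-1/r})$. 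Adding $\Lambda$ and the $\Sigma_j$'s yields the claimed bound $\tilde{O}(n^{2-1/\lceil \ell/2\rceil})$.

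The routine parts are the weight/gluing bookkeeping (a cycle homomorphism restricted to a pair of internally disjoint arcs factors multiplicatively over the arcs, and conversely a consistent pair of arc-homomorphisms glues back), and matching each arc of $\vec{C}_{2\ell}$ with the correct copy of $P_r$ (which endpoint plays the role of ``vertex $1$''). The step I expect to be the crux is choosing \emph{how to cut} in each case: $\Lambda$ must use the \emph{balanced} cut into $P_r$ and $P_{r'}$ so the $N$-tables stay of size $\tilde{O}(n\Delta^{r-1})$ — an unbalanced cut would give an all-low arc as long as $P_{\ell-1}$, hence an $N$-table of size up to $n\Delta^{\ell-2}$, which exceeds $n^{2-1/r}$ once $\ell\ge 4$ — whereas $\Sigma_j$ must instead unroll the \emph{whole} cycle at $q_j$ and be handled entirely through Item~2, precisely because that item's cost is insensitive to path length; conflating the two (e.g.\ cutting at $q_1$ and the first high sink) reintroduces an over-long all-low arc. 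Ensuring the decomposition is an exact partition — which is what indexing by the \emph{first} high sink provides — also needs a little care.
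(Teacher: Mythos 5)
Your proposal is correct and follows essentially the same route as the paper's proof: the same threshold $\Delta = n^{1/\lceil \ell/2 \rceil}$, the same classification of homomorphisms by the low/high pattern of the sink images, the all-low case handled via the balanced cut into $P_{\lfloor \ell/2\rfloor}$ and $P_{\lceil \ell/2\rceil}$ with Item 1, and the high case via Item 2. The only (valid) deviation is in the high case, where the paper cuts the cycle into two near-halves sharing the high sink as a common endpoint and sums products $M_{\lfloor \ell/2\rfloor,\Delta,f}(x,y)\cdot M_{\lceil \ell/2\rceil,\Delta,f'}(x,y)$ over pairs with $x$ high, whereas you unroll the whole cycle at the first high sink into a single $P_\ell$ and sum the diagonal values $M_{\ell,\Delta,g}(z,z)$ over high $z$ --- both variants cost $\tilde{O}(n^2/\Delta)$ because Item 2's runtime is insensitive to the (constant) path length.
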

	\begin{proof}
		Set $\Delta := n^{1/\lceil \ell/2 \rceil}$. 
		Denote the sinks of $\vec{C}_{2\ell}$ by $t_1,\dots,t_{\ell}$. In order to compute $\hom(\vec{C}_{2\ell},\vec{G},w_{\vec{G}})$, we will compute, for every function 
		$g : \{t_1,\dots,t_{\ell}\} \rightarrow \{\textup{low, high}\}$, the total weight $\hom_g$ of homomorphisms 
		$\varphi : \vec{C}_{2\ell} \rightarrow \vec{G}$ such that $\textup{in-deg}(\varphi(t_i)) \leq \Delta$ if $g(t_i) = \textup{low}$ and 
		$\textup{in-deg}(\varphi(t_i)) > \Delta$ if $g(t_i) = \textup{high}$ (for every $1 \leq i \leq \ell$). Clearly, $\hom(\vec{C}_{2\ell},\vec{G},w_{\vec{G}}) = \sum_{g}{\hom_g}$. We consider two cases. Suppose first that $g(t_i) = \textup{low}$ for every $1 \leq i \leq \ell$. It is easy to see that $\vec{C}_{2\ell}$ consists of a copy of $P_{\lfloor \ell/2 \rfloor}$ and a copy of $P_{\lceil \ell/2 \rceil}$, glued together at their endpoints. It is now easy to see that for this particular $g$, one has 
		$$
		\hom_g = \sum_{\substack{x,y \in V(\vec{G}): \\ \textup{in-deg}(x) \leq \Delta \\ \textup{in-deg}(y) \leq \Delta}}
		{N_{\lfloor \ell/2 \rfloor,\Delta}(x,y) \cdot 
			N_{\lceil \ell/2 \rceil,\Delta}(x,y)},
		$$
		where $N_{r,\Delta}(x,y)$ is as defined in Item 1 of Lemma \ref{lem:path_counting}. Since we can compute the counts 
		\linebreak $N_{\lfloor \ell/2 \rfloor,\Delta}(x,y), 
		N_{\lceil \ell/2 \rceil,\Delta}(x,y)$ for all pairs of vertices $x,y \in V(G)$ (simultaneously) in time $\tilde{O}(n \Delta^{\lceil \ell/2 \rceil-1})$, and since the total number of pairs for which these counts are non-zero is 
		$O(n \Delta^{\lceil \ell/2 \rceil-1})$, the above sum can be computed in time 
		$\tilde{O}(n \Delta^{\lceil \ell/2 \rceil-1}) = 
		\tilde{O}(n^{2 - 1/\lceil \ell/2 \rceil})$, as required.
		
		Suppose now that $g(t_i) = \textup{high}$ for some $1 \leq i \leq \ell$.
		We proceed similarly to the previous case. Decompose $\vec{C}_{2\ell}$ into a copy $P$ of $P_{\lfloor \ell/2 \rfloor}$ and a copy $P'$ of $P_{\lceil \ell/2 \rceil}$, such that $t_i$ is an endpoint of both of these paths. Let $f,f'$ be the ``low/high signatures" corresponding the paths $P,P'$, respectively. Again, it is not hard to see that 
		$$
		\hom_g = 
		\sum_{\substack{x,y \in V(\vec{G}): \\ \textup{in-deg}(x) > \Delta}}
		{M_{\lfloor \ell/2 \rfloor,\Delta,f}(x,y) \cdot 
			M_{\lceil \ell/2 \rceil,\Delta,f'}(x,y) }.
		$$
		By Item 2 of Lemma \ref{lem:path_counting}, one can compute $M_{\lfloor \ell/2 \rfloor,\Delta,f}(x,y)$ and
		$M_{\lceil \ell/2 \rceil,\Delta,f'}(x,y)$ for all pairs 
		$x,y \in V(\vec{G})$ with $\textup{in-deg}(x) > \Delta$ in time $\tilde{O}(n^2/\Delta)$. Furthermore, the number of such pairs $x,y$ is $O(n^2/\Delta)$, because there are $O(n/\Delta)$ choices for $x$. We conclude that the above sum can be computed in time 
		$\tilde{O}(n^2/\Delta) = \tilde{O}(n^{2 - 1/\lceil \ell/2 \rceil})$, as required. This completes the proof of the theorem. 
	\end{proof}
	
	\begin{proof}[Proof of Theorem \ref{thm:main}, Part (ii)]
		Let $G$ be an $n$-vertex degenerate graph and consider the DAG $\vec{G}$ obtained (in linear time) from a degenerate ordering of $G$. We need to compute $\hom(C_{2k},G)$ in $O(n^{2-1/\lceil k/2 \rceil})$ time (the proof for $\hom(C_{2k+1},G)$ is analogous). Recall that this amounts to
		computing $\hom(\vec{H},\vec{G})$ for every acyclic orientation $\vec{H}$ of $C_{2k}$.
		So, consider some such $\vec{H}$. If $\vec{H}$ has exactly one source, then we can enumerate all homomorphisms from $\vec{H}$ to $\vec{G}$ --- and in particular compute $\hom(\vec{H},\vec{G})$ --- in linear time\footnote{This is because after choosing the image of the unique source of $\vec{H}$ (under a homomorphism from $\vec{H}$ to $\vec{G}$), there are $O(1)$ choices for every other vertex of $\vec{H}$, as all vertices of $\vec{H}$ are reachable from the source and $\vec{G}$ has bounded out-degrees.}; see, e.g., \cite{Bressan}. 
		Suppose then that $\vec{H}$ has at least two sources, and recall that $\vec{H}$ is a directed subdivision of the alternating
		orientation $\vec{C}_{2\ell}$ of $C_{2\ell}$ for some $2 \le \ell \le k$.
		It now follows from Theorem \ref{thm:reduction-subdivision_main} that it suffices to show that
		$\textsc{w-hom-cnt}_{\vec{C}_{2\ell}}$ can be solved in time $\tilde{O}(n^{2-1/\lceil k/2 \rceil})$
		in weighted degenerate DAGs, which is precisely the statement of Theorem \ref{thm:comb}.
	\end{proof}
	
	\section{Counting Homomorphisms of Cycles: Using Matrix Multiplication}\label{sec:mat-mult}
	
	In  this section we prove the first part of Theorem \ref{thm:main}.
	%
	%
	We begin with describing an algorithm for counting weighted homomorphisms of alternating orientations of cycles. As mentioned in the introduction, our algorithm is inspired by the algorithm of Yuster and Zwick \cite{YZ} for finding a directed $k$-cycle in a (general) directed graph but bears several significant differences and tweaks.
	As in \cite{YZ}, our approach for enumerating (weighted) homomorphisms is by classifying them into a polylogarithmic number of classes, according to the in-degrees of the images of the (sink) vertices of the $\vec{C}_{2k}$ (the alternating orientation of $C_{2k}$).
	
	\begin{theorem}\label{thm:UZ-C_2k-alternating}
		Let $k \geq 2$ and let $\vec{C}_{2k}$ be the alternating orientation of $C_{2k}$. Then, the
		$\textsc{w-hom-cnt}_{\vec{C}_{2k}}$ problem can be solved in degenerate digraphs in time $\tilde{O}(n^{c_k})$.
	\end{theorem}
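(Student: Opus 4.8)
The plan is to reduce the computation to the trace of a power of a \emph{sparse} matrix, and then to compute that trace by an algorithm whose matrix‑multiplication part is essentially the Yuster--Zwick $k$‑cycle detection algorithm, so that the recursion governing its running time is exactly the one resolved by Dalirrooyfard, Vuong and Vassilevska Williams \cite{DVW}. For the reduction, let $w$ be the weight function of $\vec{G}$, whose out‑degrees are bounded by $\kappa=O(1)$, and label the sinks and sources of $\vec{C}_{2k}$ as $t_1,\dots,t_k$ and $s_1,\dots,s_k$ with $s_i\to t_i$ and $s_i\to t_{i+1}$ (indices modulo $k$). Fixing the images $y_1,\dots,y_k$ of the sinks and summing over the images of the sources (which, given the $y_i$, are chosen independently) gives
\[
\hom(\vec{C}_{2k},\vec{G},w)=\sum_{y_1,\dots,y_k\in V(\vec{G})}\ \prod_{i=1}^{k}B_{y_i,\,y_{i+1}}=\operatorname{tr}(B^k),
\]
with $y_{k+1}:=y_1$, where $B$ is the symmetric $n\times n$ matrix $B_{y,z}=\sum_{u\,:\,u\to y,\ u\to z}w(u,y)\,w(u,z)$. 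The key point is that $B$ is sparse: each source $u$ of $\vec{G}$ contributes to at most $\binom{\kappa}{2}$ pairs, so $B$ has only $m:=O(n)$ nonzero entries, i.e.\ it is (with loops allowed) the weighted adjacency matrix of a graph $\mathcal{B}$ on $n$ vertices with $O(n)$ edges. Thus the task becomes: compute the weighted number of closed $k$‑walks in $\mathcal{B}$ --- precisely the counting analogue of detecting a directed $k$‑cycle in an $m$‑edge digraph. Note that, unlike in \cite{YZ}, no colour‑coding is needed here, since homomorphisms may repeat vertices.

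I would then run a counting version of the Yuster--Zwick scheme on $\mathcal{B}$. Using a degree threshold $\Delta$, call a vertex of $\mathcal{B}$ \emph{heavy} if its $\mathcal{B}$‑degree exceeds $\Delta$; there are $O(n/\Delta)$ of them, and a vertex of $\vec{G}$ is heavy iff its in‑degree exceeds $\Delta$ (up to the factor $\kappa$). Partition the closed $k$‑walks counted by $\operatorname{tr}(B^k)$ according to which positions hold heavy vertices, assigning each walk canonically to the first position holding a light vertex (walks with no light vertex forming a separate class). A class pinned at a light vertex $v$ has its two path‑segments emanating from $v$; since $v$ has at most $\Delta$ $\mathcal{B}$‑neighbours and $\vec{G}$ has bounded out‑degree, these segments are enumerated directly, carrying a ``low/high signature'' on the interior sinks exactly as in Lemma~\ref{lem:path_counting}, which keeps the enumeration cheap and the intermediate count‑matrices sparse; any heavy sub‑segment is handed to the matrix‑multiplication routine. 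The all‑heavy class (and the heavy sub‑segments) are handled by the recursive Yuster--Zwick step: split the walk length into two, recursively compute the weighted length‑$j$ ``reachability'' matrices among the $O(n/\Delta)$ heavy vertices, and combine them by rectangular matrix multiplication.

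Correctness is then immediate: the canonical assignment makes the classes a partition of all closed $k$‑walks, and every operation (summing homomorphism weights, multiplying walk‑counting matrices, tracing) is an exact identity over $\mathbb{R}_{\ge0}$, so exact arithmetic returns the exact weighted count, with the low/high signature bookkeeping identical to that of Lemma~\ref{lem:path_counting}. For the running time, the combinatorial (light‑pinned) part contributes terms of the form $\tilde{O}(n\cdot\mathrm{poly}(\Delta))$ that are dominated once $\Delta$ is chosen, while the matrix‑multiplication part performs exactly the family of rectangular multiplications occurring in the Yuster--Zwick $k$‑cycle algorithm, now on $O(n/\Delta)$ vertices. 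Hence its cost obeys the same recursion, whose solution --- established in the breakthrough of \cite{DVW} --- is the exponent $c_k$ of \eqref{eq:small-c_k}; balancing the combinatorial term against it with the appropriate $\Delta$ gives total time $\tilde{O}(n^{c_k})$. (Since $\textsc{w-hom-cnt}_{\vec{C}_{2k}}$ is over weighted digraphs, one carries the weights throughout; this changes nothing, as all steps are $\mathbb{R}$‑linear.)

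The main obstacle is not the trace identity but engineering the algorithm so that its matrix‑multiplication subroutine is \emph{literally} the Yuster--Zwick one, so that the intricate optimization resolved in \cite{DVW} can be invoked unchanged for the exponent --- while simultaneously (i) never materializing a dense $n\times n$ matrix (the powers $B^j$ for $j\ge2$ can be dense, which is exactly why the heavy/light split and the low/high signatures of Lemma~\ref{lem:path_counting} are indispensable), (ii) passing from detection to \emph{exact weighted counting}, and (iii) accommodating the weighted, bounded‑out‑degree (``degenerate'') input in the spirit of \cite{Bressan}. Verifying that these twists interlock cleanly with the recursion --- and in particular leave its solution $c_k$ untouched --- is where the real work lies.
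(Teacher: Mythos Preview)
Your reduction to $\operatorname{tr}(B^k)$ is exactly right and matches the paper: the ``cherry'' matrix $B_{y,z}=\sum_{u:u\to y,\,u\to z}w(u,y)w(u,z)$ has $O(n)$ nonzero entries, and the weighted homomorphism count equals the trace of its $k$th power. The problem is your description of how to compute this trace.

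What you describe --- a \emph{single} threshold $\Delta$, a heavy/light dichotomy, combinatorial enumeration pinned at a light vertex with low/high signatures, and matrix multiplication only on the $O(n/\Delta)$ heavy vertices --- is essentially the combinatorial algorithm of Section~\ref{sec:comb}, not the Yuster--Zwick algorithm. That scheme yields $\tilde O(n^{2-1/\lceil k/2\rceil})$, not $\tilde O(n^{c_k})$. In particular, your claim that the heavy part ``performs exactly the family of rectangular multiplications occurring in the Yuster--Zwick $k$-cycle algorithm'' is unjustified: with one threshold, the heavy vertex set has a single size $O(n/\Delta)$, so every matrix you form there is square of that same dimension, and no genuinely rectangular product arises. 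There is then nothing for the DVW recursion to act on, and ``balancing the combinatorial term against it with the appropriate $\Delta$'' cannot produce $c_k$ for $k\ge 4$.

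The paper's approach (and the actual Yuster--Zwick scheme) is different: partition the vertices into $\log n$ in-degree classes $W_i=\{v:2^i\le\text{in-deg}(v)<2^{i+1}\}$, and for each $k$-tuple $(f_0,\dots,f_{k-1})$ of classes, compute the trace of the product $A_{f_0,f_1}A_{f_1,f_2}\cdots A_{f_{k-1},f_0}$, where $A_{i,j}$ is the restriction of $B$ to $W_i\times W_j$. These are \emph{rectangular} matrices of dimensions roughly $n^{1-d_i}\times n^{1-d_j}$ (with $d_i=f_i/\log n$), each with $O(n)$ nonzero entries. The product is then computed by dynamic programming with three options at each step --- extend on the left (cost $P^d_{i+1,j}+d_{i+1}$), extend on the right (cost $P^d_{i,j-1}+d_{j-1}$), or split and multiply (cost $M(1-d_i,1-d_r,1-d_j)$) --- which is \emph{literally} the recursion \eqref{eq:runtime-recur-Pij} analysed in \cite{DVW}. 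The crucial ``extend'' steps exploit the bounded out-degree of $\vec G$: from a vertex $y$ in class $W_{f_{j-1}}$ one enumerates its $\approx n^{d_{j-1}}$ in-neighbours and then the $O(1)$ out-neighbours of each, recovering all $z$ with $B_{y,z}\ne 0$. Your proposal never sets up this per-tuple rectangular structure, so the DVW analysis cannot be invoked as written.
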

	
	\begin{proof}
		Let $\vec{G}$ be an $n$-vertex degenerate weighted digraph with weight function $w_{\vec{G}}$. Let us denote by $u_0, \dots, u_{k-1}$ the source vertices of $\vec{C}_{2k}$, and by $s_0, \dots, s_{k-1}$ the sink vertices of $\vec{C}_{2k}$, such that $(u_i,s_i), (u_i, s_{i-1})$ are edges for all $i$ (here and in what follows, indices are taken modulo $k$).  
		
		We call a triple of vertices $x,y,z \in V(\vec{G})$ such that $(z,x), (z,y)$ are edges a {\em cherry}. We start by observing that in $O(n)$ time one can enumerate all the cherries in $\vec{G}$, as the out-degree of every vertex in $\vec{G}$ is bounded. In particular, for every pair of vertices $x,y \in V(\vec{G})$, we will store in a hash table, denoted $\mathcal{HM}$, the sum of weight products of cherries with endpoints $x,y$ (i.e., the sum of $w_{\vec{G}}(z,x) \cdot w_{\vec{G}}(z,y)$ over all vertices $z$ such that $(z,x), (z,y)$ are edges). This number will be denoted by $\mathcal{HM}[x,y]$. Note that there are only $O(n)$ pairs $(x,y)$ for which $\mathcal{HM}[x,y] > 0$.
		
		Let us partition the vertices of $\vec{G}$ into $\log n$ degree classes, as follows. For every $0 \leq i < \log n $, let
		$$ W_i = \{v \in V(\vec{G}) \, | \, 2^i \leq \text{in-deg}(v) < 2^{i+1}\}. $$ 
		We refer to a degree class $W_j$ by its index $j$, for simplicity of notation. 
		Clearly, we have that $|W_i| = O(n/2^i)$ (recall that $\vec{G}$ has $O(n)$ edges). 
		
		Our approach for enumerating (weighted) homomorphisms from $\vec{C}_{2k}$ to $\vec{G}$ will be classifying them into $O(\log^k n)$ classes, according to the degree classes of the images of $s_0, \dots, s_{k-1}$.
		
		Now, let us fix a tuple of degree classes $f = (f_0, \dots, f_{k-1})$ with $f_r \in \{0, \dots, \log n\}$. Our goal is to compute the weighted homomorphism count of homomorphisms $\varphi: \vec{C}_{2k} \rightarrow \vec{G}$ such that $\varphi(s_j) \in W_{f_j}$ for all $0 \leq j < k$ (i.e., $\varphi(s_j)$ has in-degree roughly $2^{f_j}$). 
		For $i,j \in \{0, \dots, \log n\}$, we denote by $A_{i,j}$ the $|W_i| \times |W_j|$ matrix such that for every $x \in W_i, y \in W_j$, $A_{i,j}(x,y) = \mathcal{HM}[x,y]$. We will sometimes represent $A_{i,j}$ as a {\em sparse} matrix, e.g., using adjacency lists. 
		Note that $A_{i,j}$ has only $O(n)$ non-zero entries, since there are in total only $O(n)$ pairs $x,y \in V(\vec{G})$ for which $\mathcal{HM}[x,y] > 0$. Hence, a sparse representation of $A_{i,j}$ can be obtained in time $O(n)$. 		
		
		Now, for $p,q \in \{0, \dots, k-1\}$, we let 
		$$ B^f_{p,q} = A_{f_p, f_{p+1}} A_{f_{p+1}, f_{p+2}} \cdots  A_{f_{q-1}, f_q}. $$
		Observe that for a given $p \in \{0, \dots, k-1\}$, $A_{f_p, f_{p+1}}(x,y)$ corresponds to the number of homomorphisms $\varphi$ of $\vec{C}_{2k}[\{s_p, u_{p+1}, s_{p+1} \}]$ such that $\varphi(s_p) = x$, $\varphi(s_{p+1}) = y$. We thus get that $B^f_{p,q}(x,y)$ corresponds to the number of homomorphisms $\varphi$ of $\vec{C}_{2k}[\{s_p, u_{p+1}, s_{p+1}, \dots, u_q, s_q \}]$ such that $\varphi(s_p) = x$, $\varphi(s_q) = y$, and $\varphi(s_j) \in W_{f_j}$ for all $p < j < q$. (This is a variation of the standard way of counting $r$-walks in a digraph using the $r^{\text{th}}$ power of its adjacency matrix.)
		Now, one can observe that the weighted homomorphism count of homomorphisms $\varphi: \vec{C}_{2k} \rightarrow \vec{G}$ such that $\varphi(s_j) \in W_{f_j}$ for $0 \leq j < k$ can be obtained in the following way. We consider the matrix chain product 
		$B^f_{0,k} = A_{f_0, f_1} A_{f_1, f_2} \cdots  A_{f_{k-2}, f_{k-1}} A_{f_{k-1}, f_0}$.
		The sum of the diagonal entries of $B^f_{0,k}$ (i.e., $\text{trace}(B^f_{0,k})$) gives the desired weighted homomorphism count. In order to compute $\text{trace}(B^f_{0,k})$ efficiently, our approach will be to pick a particular pair $i, j \in \{0, \dots, k-1\}$, compute $B^f_{i,j}$ and $B^f_{j,i}$, and then use them to compute $\text{trace}(B^f_{i,j} B^f_{j,i}) = \text{trace}(B^f_{0,k})$. Our choice of $i,j$ will be such that the running time is minimized. 
		
		It will be more convenient to express the degrees in terms of the number of edges in $\vec{G}$, which is $O(n)$. Specifically, when considering a $k$-tuple of degree classes $(f_0, \dots, f_{k-1})$ with $f_r \in \{0, \dots, \log n\}$, we will let $n^{d_j} = 2^{f_j}$, and so $d_j = f_j / \log n$. Notice that $0 \leq d_j \leq 1$, and that $|W_j| = O(n^{1-d_j})$.
		
		For a fixed tuple of degree classes $d = (d_0, \dots, d_{k-1})$, let $P^d_{i,j}$ be the minimum such that $B^d_{i,j}$ can be computed in time $O(n^{P^d_{i,j}})$. We observe that this also gives an upper bound on the number of non-zero entries in $B^d_{i,j}$. 
		The matrix $B^d_{i,j}$ can be computed in three ways, as follows: 
		
		\begin{enumerate}
			\item Compute a sparse representation of $B^d_{i,j-1}$. Then, for every entry $(x,y) \in W_{f_i} \times W_{f_{j-1}}$ of $B^d_{i,j-1}$, traverse all of the in-neighbors of $y$, and
			denote by $S_y$ the set of their out-neighbors in $W_{f_j}$ (i.e., $w \in S_y$ if and only if $w \in W_{f_j}$ and there exists $u \in V(\vec{G})$ such that $(u,y),(u,w) \in E(\vec{G})$). Now, for each $w \in S_y$, update
			$B^d_{i,j}(x,w) = B^d_{i,j}(x,w) + B^d_{i,j-1}(x,y) \cdot \mathcal{HM}[y,w]$. 
			The computation of $B^d_{i,j-1}$ takes $O(n^{P^d_{i,j-1}})$ time. Now, each $y \in W_{f_{j-1}}$ has at most $O(n^{d_{j-1}})$ in-neighbors, with each having $O(1)$ out-neighbors (i.e., $|S_y|=O(n^{d_{j-1}})$ for each $y \in W_{f_{j-1}}$). Therefore, a sparse representation of $B^d_{i,j}$ can be computed in time $O(n^{P^d_{i,j-1} + d_{j-1}})$ (as the size of a sparse representation of $B^d_{i,j-1}$ is $O(n^{P^d_{i,j-1}})$).
			
			\item Similar to the above, but reversing the roles of $j-1$ and $i$: Compute a sparse representation of $B^d_{i+1,j}$. Then, for every entry $(x,y) \in W_{f_{i+1}} \times W_{f_j}$ of $B^d_{i+1,j}$, traverse all of the in-neighbors of $x$, and denote by $S_x$ the set of their out-neighbors in $W_{f_i}$ (i.e., $w \in S_x$ if and only if $w \in W_{f_i}$ and there exists $u \in V(\vec{G})$ such that $(u,x),(u,w) \in E(\vec{G})$).
			Now, for each $w \in S_x$, update
			$B^d_{i,j}(w,y) = B^d_{i,j}(w,y) + \mathcal{HM}[w,x] \cdot B^d_{i+1,j}(x,y)$. 
			The computation of $B^d_{i+1,j}$ takes $O(n^{P^d_{i+1,j}})$ time. Now, each $x \in W_{f_{i+1}}$ has at most $O(n^{d_{i+1}})$ in-neighbors, with each having $O(1)$ out-neighbors (i.e., $|S_x|=O(n^{d_{i+1}})$ for each $x \in W_{f_{i+1}}$). Therefore, a sparse representation of $B^d_{i,j}$ can be computed in time $O(n^{P^d_{i+1,j} + d_{i+1}})$ (as the size of a sparse representation of $B^d_{i+1,j}$ is $O(n^{P^d_{i+1,j}})$). 
			
			\item For some $i < r < j$, compute $B^d_{i,r}$ and $B^d_{r,j}$. Then, compute their product to obtain $B^d_{i,j}$. It takes $O(n^{P^d_{i,r}} + n^{1-d_i}n^{1-d_r})$ time to compute the (non-sparse) matrix representation of $B^d_{i,r}$, and $O(n^{P^d_{r,j}} + n^{1-d_r}n^{1-d_j})$ time to compute the (non-sparse) matrix representation of $B^d_{r,j}$.
			Finally, it takes $O(n^{M(1-d_i, 1-d_r, 1-d_j)})$ time to compute their product, where $M(a,b,c)$ is the smallest $g$ such that one can multiply an $n^a \times n^b$ by an $n^b \times n^c$ matrix in time $O(n^g)$. It is not difficult to see (see, e.g., \cite{HP}), that $M(a,b,c) \leq a+b+c - (3-\omega)\min\{a,b,c\}$.
		\end{enumerate}
		
		Now, the exponent of the running time in Item 1 is recursively bounded as 
		$P^d_{i,j} \leq P^d_{i,j-1} + d_{j-1}$. Similarly, for Item 2 we have the bound 
		$P^d_{i,j} \leq P^d_{i+1,j} + d_{i+1}$. On the other hand, the running time of Item 3 is bounded by  
		$$ P^d_{i,j} \leq \min_{i < r < j} \max \{P^d_{i,r}, P^d_{r,j}, M(1-d_i, 1-d_r, 1-d_j)\}. $$
		
		We observe that $P^d_{i,i+1} = 1$ as a sparse representation of $B^d_{i,i+1} = A_{f_{i},f_{i+1}}$ can be obtained in time $O(n)$. 
		To summarize, we have the following inductive definition:
		\begin{equation}\label{eq:runtime-recur-Pij}
		\begin{gathered}
		P^d_{i,i+1} = 1 \\
		P^d_{i,j} = \min \{P^d_{i,j-1} + d_{j-1}, P^d_{i+1,j} + d_{i+1}, 
		\min_{i < r < j} \max \{P^d_{i,r}, P^d_{r,j}, M(1-d_i, 1-d_r, 1-d_j)\} \}
		\end{gathered}
		\end{equation}
		
		Now, given sparse representations of $B^d_{i,j}$ and $B^d_{j,i}$ for some $i, j \in \{0, \dots, k-1\}$, we can compute $\text{trace}(B^d_{i,j} B^d_{j,i})$ in time of the number of non-zero entries in $B^d_{i,j}$ and $B^d_{j,i}$, which is $O(n^{P^d_{i,j}} + n^{P^d_{j,i}})$. This is true as $\text{trace}(B^d_{i,j} B^d_{j,i}) = \sum_{p,q} (B^d_{i,j})_{p,q} \cdot (B^d_{j,i})_{q,p}$, and so we can exploit the sparse representations of $B^d_{i,j}$ and $B^d_{j,i}$\footnote{One way to do it is as follows. For every entry $(y,x)$ in the sparse representation of $B^d_{j,i}$, store its value in a hash table, with $(y,x)$ being the key. Then, go over all entries $(x,y)$ in the sparse representation of $B^d_{i,j}$, and check whether the key $(y,x)$ is in the hash table. If so, accumulate the product of the two entries.}. We note that if  $B^d_{i,j}$ and $B^d_{j,i}$ are given in (non-sparse) matrix representations, we can always convert them into spare representations, as the running time of this conversion is dominated by the time it took to create those matrices.  
		Finally, recall that $\text{trace}(B^d_{i,j} B^d_{j,i})$ corresponds to the weighted homomorphism count for the current degree class $d$.
		
		For $d = (d_0, \dots, d_{k-1})$, define
		\begin{equation}\label{eq:big-C(K)}
		C_k(d_0, \dots, d_{k-1}) = \min_{0 \leq i < j \leq k-1} \max \{P^d_{i,j}, P^d_{j,i}\}.
		\end{equation}
		
		Now, given $d = (d_0, \dots, d_{k-1})$, the algorithm solves a dynamic programming problem of constant size, based on \eqref{eq:runtime-recur-Pij} and \eqref{eq:big-C(K)}, and determines the optimal way of computing the weighted homomorphism count of the current degree class. This is then repeated for all degree classes. 
		
		Therefore, the running time for computing the weighted homomorphism count for a degree class $d$ is $O(n^{C_k(d_0, \dots, d_{k-1})})$, and so the total running time is $\tilde{O}(n^{c_k})$, where 
		\begin{equation}\label{eq:small-c_k}
		c_k = \max_{d=(d_0, \dots, d_{k-1})} C_k(d_0, \dots, d_{k-1})\footnote{The definition of $c_k$ in \cite{DVW} (which corresponds to the one of Yuster and Zwick) has an additional component (which comes from a certain combinatorial procedure, that is not applicable to our algorithm). More specifically, it is defined as 
			$c_k = \max_{d=(d_0, \dots, d_{k-1})} \min \left\{ \min_{0 \le i \le k-1} (2-d_i), 
			C_k(d_0, \dots, d_{k-1}) \right\}$.
			Seemingly, this definition is not equivalent to ours, and (possibly) gives a better running time. However, this is not the case. In particular, the upper bounds on $c_k$ (as defined in \cite{DVW}) are obtained by attaining upper bounds on $C_k$. Specifically, it is shown in \cite{DVW}, that for odd $k$ and arbitrary $d_0, \dots, d_{k-1}$, $C_k(d_0, \dots, d_{k−1}) \le \frac{(k+1)\omega}{2\omega+k-1}$,
			and, for $\omega \le \frac{2k}{k-1}$, a specific sequence $d=(d_0, \dots, d_{k−1})$ is presented such that $\min_{0 \le i \le k-1} (2-d_i) \ge C_k(d_0, \dots, d_{k−1}) = \frac{(k+1)\omega}{2\omega+k-1}$ (see Section 5.4 in \cite{DVW});
			and for even $k$ and arbitrary $d_0, \dots, d_{k-1}$, 
			$C_k(d_0, \dots, d_{k−1}) \le \frac{k\omega - \frac{4}{k}}{2\omega + k-2-\frac{4}{k}}$, 
			and, for $\omega = 2$, a specific sequence $d=(d_0, \dots, d_{k−1})$ is presented such that $\min_{0 \le i \le k-1} (2-d_i) \ge C_k(d_0, \dots, d_{k−1}) = \frac{k\omega - \frac{4}{k}}{2\omega + k-2-\frac{4}{k}}$ 
			(see Section 5.6 in \cite{DVW}). Therefore, as the minimum between $\min_{0 \le i \le k-1} (2-d_i)$ and $C_k(d_0, \dots, d_{k-1})$ is chosen, we conclude that the expression in our definition of $c_k$ is, in fact, equal to the one presented in \cite{DVW}.}.
		\end{equation}
		
		Both \eqref{eq:runtime-recur-Pij} and \eqref{eq:big-C(K)} are defined by Dalirrooyfard et al. (see Section 5 in \cite{DVW}) in the exact same manner, as part of their description and analysis of the Yuster-Zwick algorithm \cite{YZ}.
		
		We conclude that the $\textsc{w-hom-cnt}_{\vec{C}_{2k}}$ problem can be solved in degenerate graphs in time $\tilde{O}(n^{c_k})$.	
	\end{proof}
	
	%
	%
	
	\begin{proof}[Proof of Theorem \ref{thm:main}, Part (i)]
		This is analogous to the proof of Part (ii), except that now we use Theorem \ref{thm:UZ-C_2k-alternating}.
		So, let $G$ be an $n$-vertex degenerate graph and let $\vec{G}$ be the DAG corresponding to a degenerate ordering of $G$. We need to compute $\hom(C_{2k},G)$ in $\tilde{O}(n^{c_k})$ time (the proof for $\hom(C_{2k+1},G)$ is analogous). As in the proof of Part (ii), it suffices to show that for $2 \le \ell \le k$,
		$\textsc{w-hom-cnt}_{\vec{C}_{2\ell}}$ can be solved in time $\tilde{O}(n^{c_k})$
		in weighted degenerate digraphs. 
		And this is indeed the case by Theorem \ref{thm:UZ-C_2k-alternating}.
	\end{proof}	
	
	\section{Counting Cycle Homomorphisms vs. Directed-Cycle Detection}\label{sec:count-vs-detect}
	
	
	%
	
	In this section we prove Theorems \ref{thm:count-vs-detect} and \ref{thm:detect-degeneracy}
	as they bear some similarity. 
	An important ingredient in the proof of Theorem \ref{thm:count-vs-detect} is the fact that the number of {\em cycle transversals} in partite graphs can be computed efficiently using an inclusion-exclusion procedure. 
	To make this statement precise we need the following definition.
	\begin{definition}\label{def:transversal}
		Let $G$ be a $p$-partite graph with a given vertex partition 
		${\mathcal P}=\{V_1,\ldots,V_p\}$.
		A {\em ${\mathcal P}$-cycle transversal} of $G$ is a simple cycle of length $p$ in $G$ containing a single vertex from each part of ${\mathcal P}$.
	\end{definition}
	
	A family of graphs ${\mathcal F}$ is {\em subgraph-closed} if $G \in {\mathcal F}$ implies that every subgraph of $G$ is also in ${\mathcal F}$. For example, $d$-degenerate graphs are subgraph-closed.
	\begin{lemma}\label{lem:transversal}\footnote{A similar lemma appears in \cite{C_Marx}.}
		Let $p \ge 3$. Suppose that ${\mathcal F}$ is a subgraph-closed family of graphs and that $\textsc{hom-cnt}_{C_p}$ can be computed in $f(n)$ time for graphs in ${\mathcal F}$.
		Then, given a $p$-partite $n$-vertex graph $G \in {\mathcal F}$ together with a vertex partition ${\mathcal P}$,
		the number of ${\mathcal P}$-cycle transversals of $G$ can be computed in $O(f(n))$ time.
	\end{lemma}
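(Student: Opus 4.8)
The plan is to compute the number of $\mathcal{P}$-cycle transversals by a standard inclusion-exclusion over the parts, reducing everything to a constant number of $\textsc{hom-cnt}_{C_p}$ computations on subgraphs of $G$. The key observation is that a $\mathcal{P}$-cycle transversal is exactly a homomorphism $\varphi : C_p \to G$ (say with $V(C_p) = \{1,\dots,p\}$ cyclically) that is injective and, moreover, sends vertex $i$ into part $V_i$ for each $i$ — so we want to count \emph{injective} homomorphisms that respect the partition, then divide by the number of automorphisms of $C_p$ (which is $2p$, a constant) to pass from labeled maps to unlabeled cycles. Note that a partition-respecting homomorphism is automatically injective as long as it is injective on each part, and since the parts are disjoint there is no cross-part collision to worry about; the only way such a $\varphi$ can fail to be injective is by collapsing two \emph{nonadjacent} vertices of $C_p$ lying in the same part — but wait, in $C_p$ vertices $i$ and $j$ with $i\neq j$ map to different parts $V_i\neq V_j$, so a partition-respecting homomorphism is injective outright. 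Hence $\mathcal{P}$-cycle transversals are in $2p$-to-$1$ correspondence with partition-respecting homomorphisms $C_p\to G$, and it remains only to count the latter.

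Next I would count partition-respecting homomorphisms by inclusion-exclusion on the set of ``forbidden'' parts. For a subset $S \subseteq \{1,\dots,p\}$, let $G_S$ be the subgraph of $G$ obtained by deleting all vertices in $\bigcup_{i\in S} V_i$; since $\mathcal{F}$ is subgraph-closed, $G_S \in \mathcal{F}$, so $\hom(C_p, G_S)$ can be computed in $O(f(n))$ time. By the usual sieve, the number of homomorphisms $\varphi:C_p\to G$ that avoid \emph{none} of the parts — equivalently, whose image meets every part — is
\begin{equation}\label{eq:sieve}
\sum_{S\subseteq\{1,\dots,p\}} (-1)^{|S|}\,\hom(C_p, G_S).
\end{equation}
A homomorphism from the (connected) cycle $C_p$ whose image meets all $p$ parts must in fact send exactly one vertex of $C_p$ into each part (there are only $p$ vertices and $p$ parts to cover), and the cyclic adjacency then forces the part-pattern around the cycle to be a cyclic rotation/reflection of $(1,2,\dots,p)$ — here one uses that any two parts $V_a,V_b$ are connected by an edge in $G$ only if\dots — hmm, actually $G$ need not have that structure. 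Let me instead observe directly: a map covering all parts uses each part once, hence \emph{is} a bijection of $V(C_p)$ onto the index set $\{1,\dots,p\}$ composed with a choice of vertex in each part; it is a homomorphism iff consecutive vertices of $C_p$ go to adjacent vertices of $G$, and being a bijection onto parts it is automatically a partition-respecting homomorphism after relabeling $C_p$ by that bijection. Thus \eqref{eq:sieve} counts partition-respecting homomorphisms with respect to \emph{all} $2p$ labelings of $C_p$ by $(1,\dots,p)$ simultaneously — i.e. it equals $2p$ times the number of $\mathcal{P}$-cycle transversals. Dividing \eqref{eq:sieve} by $2p$ gives the answer.

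Putting it together: the algorithm computes $G_S$ and $\hom(C_p,G_S)$ for each of the $2^p = O(1)$ subsets $S$, forms the alternating sum \eqref{eq:sieve}, and divides by $2p$; the total running time is $2^p \cdot O(f(n)) + O(2^p n) = O(f(n))$ since $p$ is a constant and $f(n) = \Omega(n)$. The one point that needs care — and the only place a subtlety can creep in — is the combinatorial bookkeeping justifying that ``covers every part'' forces the homomorphism to be injective and to realize, under exactly $2p$ relabelings, each genuine transversal: this is where one must be precise that $C_p$ has $2p$ automorphisms and that a surjective-onto-parts homomorphism of $C_p$ is a vertex-disjoint selection traversing the parts in a single cyclic order. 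The rest is routine inclusion-exclusion. $\blacksquare$
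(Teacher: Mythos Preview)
Your approach is correct and essentially identical to the paper's: both use inclusion--exclusion over subsets of the parts, computing $\hom(C_p,\cdot)$ on the $2^p$ induced subgraphs obtained by deleting some parts, and then dividing by $2p$. The only cosmetic difference is that the paper indexes by the set of parts \emph{kept} (writing the sign as $(-1)^{p-|S|}$) while you index by the set of parts deleted; your mid-proof detour through ``partition-respecting homomorphisms'' before settling on ``homomorphisms whose image meets every part'' is unnecessary, since the latter is exactly what the sieve yields and the pigeonhole argument you give at the end is all that is needed.
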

	
	\begin{proof}
		Let ${\mathcal P}=\{V_1,\ldots,V_p\}$. For a non-empty subset $S \subseteq [p]$, let $G_S$ denote the
		$|S|$-partite induced subgraph of $G$ with partition ${\mathcal P}_S=\{V_i \,|\, i \in S\}$.
		Let $M$ denote the number of homomorphisms that are ${\mathcal P}$-cycle transversals of $G$ (the number of ${\mathcal P}$-cycle transversals can be derived from the number of such homomorphisms after dividing by $2p$).
		By the inclusion-exclusion principle we have
		$$
		M = \sum_{S \subseteq [p]} (-1)^{p-|S|}\hom(C_p,G_S)\;.
		$$
		Since ${\mathcal F}$ is subgraph-closed and since $G_S$ has at most $n$ vertices, we can compute $\hom(C_p,G_S)$ in $f(n)$ time. Hence, $M$ can be computed in $O(2^p f(n))=O(f(n))$ time, as required.
	\end{proof}
	
	\begin{proof}[Proof of Theorem \ref{thm:count-vs-detect}]
		We assume that there is an algorithm that computes $\textsc{hom-cnt}_{C_{2k}}$ in \linebreak $2$-degenerate graphs in $O(n^\alpha)$ time. Let $G$ be an arbitrary directed graph with $m$ edges. Recall that our goal is to decide if $G$ has a simple $C_k$. We first describe a randomized algorithm, based on the color-coding method \cite{AYZ_ColorCoding}. Randomly partition the vertices of $G$ into $k$ parts $V_1,\ldots,V_k$. Let $G'$ be the subgraph of $G$ consisting only of the edges $(u,v)$ such that there exists $1 \le i \le k$ where $u \in V_i$ and $v \in V_{i+1}$ (indices taken modulo $k$).
		Notice that $G'$ has no directed cycles whose length is shorter than $k$, and with constant probability (at least $k^{-k}$), $G'$ has a directed $C_k$ if $G$ has one.
		
		Next, subdivide each edge $(u,v)$ of $G'$ into two edges $(u,z_{uv})$ and $(z_{uv},v)$ where $z_{uv}$ is a new vertex, and denote the newly obtained graph by $G^*$. Furthermore, ignore edge directions in $G^*$
		and hence $G^*$ is an undirected $2$-degenerate graph with $O(m)$ vertices. Also,
		$G^*$ is $(2k)$-partite as can be seen from the partition of
		of its vertex set ${\mathcal P}=\{V_1,V_{1,2},V_2,V_{2,3},\ldots,V_k,V_{k,1}\}$ where
		$V_{i,i+1} = \{z_{uv}\,|\, u \in V_i,\, v \in V_{i+1}\,, (u,v) \in E(G')\}$.
		The simple but crucial point now is that $G'$ has a directed $C_k$ if and only if $G^*$
		has a ${\mathcal P}$-cycle transversal. Indeed, every ${\mathcal P}$-cycle transversal must be of the form
		$(u_1,z_{u_1u_2},u_2,z_{u_2u_3},\ldots,u_k,z_{u_k u_1})$ where $(u_1,\ldots,u_k)$ is a directed $C_k$ in $G$.
		Now, by Lemma \ref{lem:transversal}, we can decide if $G^*$ has a ${\mathcal P}$-cycle transversal in
		$O(|v(G^*)|^\alpha)=O(m^\alpha)$ time, implying that we can decide if $G'$ has a $C_k$ in the same time.
		
		To make the algorithm deterministic we recall that the color-coding method can be derandomized. In particular, it is shown in \cite{AYZ_ColorCoding} that one can deterministically compute $O(\log m)$ partitions of the vertices of $G$ (the partitions can be computed in $O(m \log m)$ time), such that if $G$ has a $C_k$, then for at least
		one of the partitions, the corresponding $G'$ has a $C_k$. Hence, the (now deterministic) running time to
		decide if $G$ has a $C_k$ is $O(m\log m + m^\alpha\log m )=O(m^\alpha \log m)$ as clearly we can assume that \nolinebreak $\alpha \ge 1$.
		
		Observe that in the statement of Theorem \ref{thm:count-vs-detect} we also claim a similar result for $C_{2k+1}$, not just $C_{2k}$ as we have just proved. Indeed,
		it is straightforward to modify the proof of Theorem \ref{thm:count-vs-detect} to obtain the following slightly more general statement: Let $k \ge 3$ and let $p \ge 2k$. If there is an algorithm that computes $\textsc{hom-cnt}_{C_p}$ in $2$-degenerate graphs in $O(n^\alpha)$ time, then there is an algorithm that decides if an arbitrary directed $m$-edge graph has a $C_k$ in time $O(m^\alpha)$ (randomized) and $O(m^\alpha \log m)$ (deterministic). One simple way to achieve this is to subdivide each edge $(u,v)$ in $G'$ into a path of length $2$ as in the proof of  Theorem \ref{thm:count-vs-detect}, except for edges of the form $(u,v)$ where $u \in V_1$ and $v \in V_2$ which are subdivided into a path of length $p+2-2k$.
	\end{proof}
	
	\begin{proof}[Proof of Theorem \ref{thm:detect-degeneracy}]
		We prove for undirected degenerate graphs; the proof for directed degenerate graphs is similar.
		Let $k \ge 6$. Let $G$ be an undirected degenerate graph with $n$ vertices in which we wish to detect a $C_k$. As in the proof of Theorem \ref{thm:count-vs-detect} we use color-coding. Randomly partition the vertices of $G$ into $k$ parts $V_1,\ldots,V_k$. Let $G'$ be the subgraph of $G$ consisting only of the edges $(u,v)$ such that there exists $1 \le i \le k$ where $u \in V_i$ and $v \in V_{i+1}$ (indices taken modulo $k$).
		Notice that $G'$ has no cycle transversal, unless $G$ has a $C_k$, and conversely, if $G$ has a $C_k$ then $G'$ has a cycle transversal with constant positive probability.
		By Lemma \ref{lem:transversal} and by Theorem \ref{thm:main}, we can determine if $G'$ has a cycle transversal in $\tilde{O}(n^{d_{\lfloor k/2 \rfloor}})$ time.
		Again, as color-coding can be derandomized, we can also obtain a deterministic algorithm in
		$\tilde{O}(n^{d_{\lfloor k/2 \rfloor}})$ time.
	\end{proof}
	
	\section{A Combinatorial Algorithm for Counting Cycle Homomorphisms in General Graphs}\label{sec:general_graphs}
	
	As our final result, we use our methods to obtain 
	a combinatorial algorithm for counting the number of $C_k$-homomorphisms in a (general) directed or undirected graph with $m$ edges in $\tilde{O}(m^{2-1/\lceil k/2 \rceil})$ time:
	\begin{theorem}\label{thm:hom_cnt_general_graphs}
		For every $k \geq 3$, there is an algorithm which, given an input (di)graph $G$ with $m$ edges, computes the number of homomorphisms from $C_k$ to $G$ in time $\tilde{O}(m^{2 - 1/\lceil k/2 \rceil})$. 
	\end{theorem}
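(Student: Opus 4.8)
The plan is to adapt the combinatorial algorithm of Section~\ref{sec:comb} to general (not necessarily degenerate) graphs, by replacing the use of a degeneracy ordering with a split of the vertices according to their degree. I describe the undirected case; the directed case (counting homomorphisms of the directed $k$-cycle, i.e., closed directed walks of length $k$) is entirely analogous, using in- and out-neighbourhoods in place of neighbourhoods. We may assume $G$ has no isolated vertices, so $n=O(m)$. Write $q=\lceil k/2\rceil$ and $q'=\lfloor k/2\rfloor$, so that $q+q'=k$ and $q'\le q$, and fix the threshold $\Delta=m^{1/q}$; call a vertex \emph{heavy} if it has degree more than $\Delta$ and \emph{light} otherwise, so that $G$ has at most $2m/\Delta=2m^{1-1/q}$ heavy vertices. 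A homomorphism $\varphi\colon C_k\to G$ is just a closed walk $\varphi(0),\dots,\varphi(k-1),\varphi(0)$ of length $k$; I would partition all such homomorphisms into the $2^k=O(1)$ classes indexed by the \emph{heavy pattern} $S=\{\,i\in\mathbb{Z}_k:\varphi(i)\text{ heavy}\,\}$ and compute the contribution of each class separately.

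For the all-light class $S=\emptyset$ I would mimic the ``low--low'' case of Theorem~\ref{thm:comb}. Split $C_k$ into two internally disjoint walks of lengths $q$ and $q'$ with common endpoints, and for $r\le q$ compute, by dynamic programming on $r$ and storing the results in a hash map, the arrays $N_r(a,b)$ counting length-$r$ walks from $a$ to $b$ all of whose vertices are light; the step from $N_{r-1}$ to $N_r$ iterates over the at most $\Delta$ light neighbours of each light vertex in the support of $N_{r-1}$. Since there are at most $2m\Delta^{r-1}$ all-light walks of length $r$, the array $N_r$ has $O(m\Delta^{r-1})$ nonzero entries and $N_1,\dots,N_q$ are computed in $\tilde{O}(m\Delta^{q-1})=\tilde{O}(m^{2-1/q})$ time; the class contribution $\sum_{a,b\text{ light}}N_q(a,b)N_{q'}(a,b)$ is then read off by scanning the support of $N_{q'}$ and looking up $N_q$, within the same time bound (using $q'\le q$).

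For a class with $S\ne\emptyset$ I would mimic the ``high'' case of Lemma~\ref{lem:path_counting}: fix a position $i\in S$ (say $i=\min S$) and a position $j$ so that the two arcs of $C_k$ between $i$ and $j$ have lengths $q$ and $q'$, and for each of the at most $2m^{1-1/q}$ heavy vertices $x$, pin $\varphi(i)=x$ and run a dynamic program starting from $x$ along each of the two arcs, recording for every vertex $y$ the number of walks from $x$ to $y$ whose $\ell$-th vertex is heavy precisely for those $\ell\in S$. The point is that a single propagation step of such a dynamic program costs only $O(m)$ regardless of the degrees involved, because expanding a level amounts to summing the degrees of the vertices in the current level's support, which is at most $\sum_{z\in V(G)}\deg(z)=2m$; hence handling one heavy $x$ costs $O(qm)=O(m)$, combining the two resulting arrays via $\sum_y(\cdots)(\cdots)$ costs another $O(m)$, and summing over heavy $x$ gives $\tilde{O}(m\cdot m^{1-1/q})=\tilde{O}(m^{2-1/q})$. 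The type restrictions are imposed only to keep the classes disjoint (so that no homomorphism is counted twice, since $i=\min S$ is determined by the class) and never slow the dynamic program down.

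Summing over the $O(1)$ classes yields the claimed running time $\tilde{O}(m^{2-1/\lceil k/2\rceil})$; the logarithmic factors, as in Lemma~\ref{lem:path_counting}, come from the hash maps used to store the sparse arrays and can be traded for randomization. The part I expect to require the most care is the analysis of the classes that contain a heavy vertex: one must check that pinning such a vertex and then running an \emph{unconstrained-degree} dynamic program from it does not blow up the running time. This works out precisely because there are only $O(m^{1-1/q})$ heavy vertices and each propagation step costs only $O(m)$, so that the total over all heavy pins is $O(m^{2-1/q})$ --- exactly matching the cost of the all-light case, which is where the threshold $\Delta=m^{1/q}$ is calibrated.
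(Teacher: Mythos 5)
Your proposal is correct and follows essentially the same route as the paper's proof: the same threshold $\Delta=m^{1/\lceil k/2\rceil}$, the same classification of homomorphisms by the low/high (light/heavy) pattern of all $k$ positions, the same split of $C_k$ into two paths of lengths $\lfloor k/2\rfloor$ and $\lceil k/2\rceil$ meeting at two vertices, sparse enumeration of all-light walks for the all-light class, and pinning a heavy vertex and running a path dynamic program for the remaining classes. The only difference is organizational: the paper's Lemma \ref{lem:path_counting_general_graphs} (Item 2) runs the heavy-case recursion jointly over all pairs (heavy $x$, edge $(u,y)$) with per-level cost $O(m^2/\Delta)$, whereas you run a separate $O(m)$-per-step DP for each heavy pin, which amounts to the same computation and the same total bound $\tilde{O}(m^{2-1/\lceil k/2\rceil})$.
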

	\noindent
	In the directed case, the $C_k$ above denotes the directed $k$-cycle. 
	
	%
	
	It should be pointed out that, as in the proof of Theorem \ref{thm:detect-degeneracy}, we can use Theorem \ref{thm:hom_cnt_general_graphs} to obtain an algorithm which detects if a directed or undirected graph has a $C_k$ in $\tilde{O}(m^{2-1/\lceil k/2 \rceil})$ time. The proof in the directed setting is particularly simple: given a digraph $\vec{G}$, we randomly partition its vertices into sets $V_1,\dots,V_k$ and only keep the edges going from $V_i$ to $V_{i+1}$ (for some $1 \leq i \leq k)$, denoting the resulting subdigraph by $\vec{G}'$. 
	If $\vec{G}$ contains a $C_k$ then with constant positive probability so does $\vec{G}'$. Moreover, every $C_k$-homomorphism in $\vec{G}'$ corresponds to a proper copy of $C_k$, so computing $\hom(C_k,\vec{G}')$ suffices in order to decide if $\vec{G}'$ contains a $C_k$. Again, this can be derandomized at the cost of increasing the runtime by a logarithmic \nolinebreak factor.

	
	
	%
	

	The proof of Theorem \ref{thm:hom_cnt_general_graphs} is similar to that of Theorem \ref{thm:comb}. To keep the presentation simple, we focus on undirected graphs; the proof for digraphs is similar. 
	Let $P_r$ denote the path with $r+1$ vertices $1,\dots,r+1$ (so $P_r$ has $r$ edges). 
	\begin{lemma}\label{lem:path_counting_general_graphs}
		There is an algorithm which, given integers $r,\Delta \geq 1$ and an input graph $G$ with $m$ edges, computes the following:
		\begin{enumerate}
			\item For every $x,y \in V(G)$, the number
			$N_{r,\Delta}(x,y)$ of homomorphisms $\varphi: P_r \rightarrow G$ such that $\varphi(1) = x$, $\varphi(r+1) = y$, and $d_G(\varphi(t)) \leq \Delta$ for every $2 \leq t \leq r$. 
			\item For every $x,y \in V(G)$ with $d_G(x) > \Delta$ and for every function $f : \{2,\dots,r+1\} \rightarrow \{\textup{low, high}\}$, the number $M_{r,\Delta,f}(x,y)$ of homomorphisms $\varphi: P_r \rightarrow G$ such that $\varphi(1) = x$ and $\varphi(r+1) = y$, and such that for every $t \in \{2,\dots,r+1\}$ it holds that $d_G(\varphi(t)) \leq \Delta$ if $f(t) = \textup{low}$ and 
			$d_G(\varphi(t)) > \Delta$ if $f(t) = \textup{high}$.
		\end{enumerate} 
		The computation for Item 1 takes time $\tilde{O}(m\Delta^{r-1})$ and the computation for Item 2 takes time $\tilde{O}(m^2/\Delta)$. 
		Furthermore, the number of pairs $x,y$ with $N_{r,\Delta}(x,y) > 0$ is $O(m\Delta^{r-1})$. 
	\end{lemma}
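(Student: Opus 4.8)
The plan is to transcribe the proof of Lemma~\ref{lem:path_counting}, replacing in-degrees by (undirected) degrees and dropping the orientation-specific casework. As there, the counts $N_{r,\Delta}(x,y)$ and $M_{r,\Delta,f}(x,y)$ are kept in hash maps that store only the pairs on which they are nonzero; this accounts for the polylogarithmic factors hidden in the $\tilde{O}$-notation, and, as before, randomization removes them at the cost of passing to expected running time. We may assume $\Delta < m$, since otherwise no vertex of $G$ has degree exceeding $\Delta$ (every vertex $v$ satisfies $d_G(v) \le m$), making Item~2 vacuous; and we will use the standard fact that at most $2m/\Delta$ vertices of $G$ have degree greater than $\Delta$.

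For Item~1, I would enumerate, one at a time, all homomorphisms $\varphi : P_r \to G$ with $d_G(\varphi(t)) \le \Delta$ for every $2 \le t \le r$, incrementing $N_{r,\Delta}(\varphi(1),\varphi(r+1))$ for each. To enumerate these, first choose the image of the first edge, i.e.\ an ordered pair $(\varphi(1),\varphi(2))$ with $\{\varphi(1),\varphi(2)\} \in E(G)$ (at most $2m$ choices), discarding it at once if $r \ge 2$ and $d_G(\varphi(2)) > \Delta$; then, for $t = 3, \dots, r+1$ in turn, choose $\varphi(t)$ among the neighbors of $\varphi(t-1)$, of which there are at most $\Delta$ because $\varphi(t-1)$ is an internal vertex and hence has degree at most $\Delta$, discarding the choice whenever $t \le r$ and $d_G(\varphi(t)) > \Delta$. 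This produces at most $2m\Delta^{r-1}$ homomorphisms, so both the running time and the number of pairs $(x,y)$ with $N_{r,\Delta}(x,y) > 0$ are $O(m\Delta^{r-1})$, up to the hash-map overhead.

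Item~2 is proved by induction on $r$. In the base case $r = 1$ there are no internal vertices, and $M_{1,\Delta,f}(x,y) = 1$ precisely when $\{x,y\} \in E(G)$ and $d_G(y)$ agrees with the value $f(2)$; all nonzero values are recorded by scanning, for each of the $O(m/\Delta)$ vertices $x$ with $d_G(x) > \Delta$, the edges incident to $x$, which costs $\sum_{d_G(x) > \Delta} d_G(x) \le 2m$ in total, well inside $\tilde{O}(m^2/\Delta)$ since $\Delta < m$. For the inductive step, observe that restricting $\varphi : P_r \to G$ to $\{1,\dots,r\}$ yields a homomorphism $\psi : P_{r-1} \to G$ with $\psi(r) = \varphi(r)$, and that $\varphi$ is recovered from $\psi$ by the choice of $\varphi(r+1) \in N(\varphi(r))$; moreover $\varphi$ satisfies the degree conditions prescribed by $f$ on $\{2,\dots,r+1\}$ if and only if $\psi$ satisfies those prescribed by $g := f|_{\{2,\dots,r\}}$ and $d_G(\varphi(r+1))$ matches $f(r+1)$. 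Accordingly, for every vertex $x$ with $d_G(x) > \Delta$, every ordered pair $(v,y)$ with $\{v,y\} \in E(G)$, and every $g : \{2,\dots,r\} \to \{\text{low},\text{high}\}$, I would add $M_{r-1,\Delta,g}(x,v)$ (looked up in the hash map from the previous stage, returning $0$ if absent; this is legitimate since $x$ is high-degree) to $M_{r,\Delta,f}(x,y)$, where $f$ extends $g$ by setting $f(r+1)$ according to whether $d_G(y) \le \Delta$. The number of examined triples $(x,(v,y),g)$ is $O\big((m/\Delta) \cdot m \cdot 2^{r-1}\big) = O(m^2/\Delta)$ for fixed $r$, and each contributes $\tilde{O}(1)$ work, giving the claimed running time.

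The only delicate point is the running-time accounting in the inductive step: one must loop over pairs (high-degree vertex $x$, ordered edge $(v,y)$) rather than over the nonzero entries of $M_{r-1,\Delta,g}$ together with their neighborhoods, since the second coordinate $v$ of such an entry may itself be a high-degree vertex with as many as $\Theta(n)$ neighbors, which would break the $O(m^2/\Delta)$ bound. Everything else is a routine transcription of the proof of Lemma~\ref{lem:path_counting}.
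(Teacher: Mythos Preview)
Your proposal is correct and follows essentially the same approach as the paper's own proof: enumerate walks for Item~1 by fixing the first edge and extending through low-degree internal vertices, and prove Item~2 by induction on $r$, looping over all pairs (high-degree vertex $x$, ordered edge $(v,y)$) and consulting the $(r-1)$-stage hash map. Your final remark about why one must loop over $(x,\text{edge})$ pairs rather than extend nonzero entries of $M_{r-1,\Delta,g}$ is a useful clarification that the paper leaves implicit.
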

	\begin{proof}
		As in Section \ref{sec:comb}, we save all computed information in a hash map. 
		We start by proving Item 1. Here we simply enumerate all homomorphisms 
		$\varphi: P_r \rightarrow G$ such that $d_G(\varphi(t)) \leq \nolinebreak \Delta$ for every $2 \leq t \leq r$. Enumerating all such homomorphisms is clearly sufficient to produce the desired counts. To choose such a homomorphism $\varphi$, we first choose the pair $(\varphi(1),\varphi(2))$, for which there are at most $2m$ choices, as this pair must be an edge. We only go over choices for which $d_G(\varphi(2)) \leq \Delta$. Having chosen $\varphi(2)$, we have at most $\Delta$ choices for $\varphi(3)$, where again we only consider choices which satisfy $d_G(\varphi(3)) \leq \Delta$. Having chosen $\varphi(3)$, there are at most $\Delta$ choices for $\varphi(4)$, and so on. Continuing in this fashion, we see that there are at most $\Delta$ choices for each of the vertices $\varphi(3),\dots,\varphi(r+1)$. Hence, the total number of choices is $O(m\Delta^{r-1})$. In particular, the number of pairs $x,y$ with $N_{r,\Delta}(x,y) > 0$ (namely, which are the endpoints in such a homomorphism) is $O(m\Delta^{r-1})$.
		
		We now establish Item 2 by induction on $r$. 
		The base case $r = 1$ is trivial, since $P_1$ is simply an edge, and we can enumerate all edges in time $m \leq m^2/\Delta$. (We may assume that $\Delta \leq m$, because otherwise there are no vertices of degree larger than $\Delta$.)
		
		We now proceed to the induction step. Assume, by the induction hypothesis, that we have already computed the desired counts for $P_{r-1}$. To achieve this for $P_r$, go over all choices for a vertex $x \in V(G)$ such that $d_G(x) > \Delta$ and a pair of vertices $(u,y)$ such that $\{u,y\} \in E(G)$. The number of such choices is at most $O(m^2/\Delta)$, since there are $2m$ choices for $(u,y)$ and $O(m/\Delta)$ choices for $x$. For each function 
		$g : \{2,\dots,r\} \rightarrow \{\textup{low, high}\}$, we have already computed the number $M_{r-1,\Delta,g}(x,u)$ of homomorphisms $\psi : P_{r-1} \rightarrow G$ such that $\psi(1) = x$ and $\psi(r) = u$, and such that for every $t \in \{2,\dots,r\}$ it holds that $d_G(\varphi(t)) \leq \Delta$ if $g(t) = \textup{low}$ and 
		$d_G(\varphi(t)) > \Delta$ if $g(t) = \textup{high}$.
		Now, for each such $\psi$ and $g$, the map 
		$\varphi : V(P_r) \rightarrow V(G)$ which agrees with $\psi$ on $P_r[\{1,\dots,r\}] = P_{r-1}$ and satisfies $\varphi(r+1) = y$, is a homomorphism from $P_r$ to $G$ satisfying $\varphi(1) = x$ and $\varphi(r+1) = y$. Furthermore, the function 
		$f : \{2,\dots,r+1\} \rightarrow \{\textup{low, high}\}$ corresponding to $\varphi$ is simply the function which agrees with $g$ on $\{2,3,\dots,r\}$ and satisfies $f(r+1) = \textup{low}$ if $d_G(y) \leq \Delta$ and $f(r+1) = \textup{high}$ if $d_G(y) > \Delta$. 
		It is now easy to see that in this manner we obtain the desired count for the pair $x,y$. 
	\end{proof}
	\begin{proof}[Proof of Theorem \ref{thm:hom_cnt_general_graphs}]
		We may assume that $G$ is connected, and hence $m \geq n - 1$, where $n = |V(G)|$. 
		Set $\Delta := m^{1/\lceil k/2 \rceil}$. 
		Denote the vertices of $C_k$ by $v_1,\dots,v_k$. In order to compute $\hom(C_k,G)$, we will compute, for every function 
		$g : \{v_1,\dots,v_k\} \rightarrow \{\textup{low, high}\}$, the number $\hom_g$ of homomorphisms 
		$\varphi : C_k \rightarrow G$ such that $d_G(\varphi(v_i)) \leq \Delta$ if $g(v_i) = \textup{low}$ and 
		$d_G(\varphi(v_i)) > \Delta$ if $g(v_i) = \textup{high}$ (for every $1 \leq i \leq k$). Clearly, $\hom(C_k,G) = \sum_{g}{\hom_g}$. We consider two cases. Suppose first that $g(v_i) = \textup{low}$ for every $1 \leq i \leq k$. It is easy to see that $C_k$ consists of a copy of $P_{\lfloor k/2 \rfloor}$ and a copy of $P_{\lceil k/2 \rceil}$, glued together at their endpoints. It is now easy to see that for this particular $g$, one has 
		$$
		\hom_g = \sum_{\substack{x,y \in V(\vec{G}): \\ d_G(x), d_G(y) \leq \Delta}}
		{N_{\lfloor k/2 \rfloor,\Delta}(x,y) \cdot 
			N_{\lceil k/2 \rceil,\Delta}(x,y)},
		$$
		where $N_{r,\Delta}(x,y)$ is as defined in Item 1 of Lemma \ref{lem:path_counting_general_graphs}. Since we can compute the counts 
		\linebreak $N_{\lfloor k/2 \rfloor,\Delta}(x,y), 
		N_{\lceil k/2 \rceil,\Delta}(x,y)$ for all pairs of vertices $x,y \in V(G)$ (simultaneously) in time $\tilde{O}(m \Delta^{\lceil k/2 \rceil-1})$, and since the total number of pairs for which these counts are non-zero is 
		$O(m \Delta^{\lceil k/2 \rceil-1})$, the above sum can be computed in time 
		$\tilde{O}(m \Delta^{\lceil k/2 \rceil-1}) = 
		\tilde{O}(m^{2 - 1/\lceil k/2 \rceil})$, as required.
		
		Suppose now that $g(v_i) = \textup{high}$ for some $1 \leq i \leq k$.
		We proceed similarly to the previous case. Decompose $C_k$ into a copy $P$ of $P_{\lfloor k/2 \rfloor}$ and a copy $P'$ of $P_{\lceil k/2 \rceil}$, such that $v_i$ is an endpoint of both of these paths. Let $f,f'$ be the ``low/high signatures" corresponding the paths $P,P'$, respectively. Again, it is not hard to see that 
		$$
		\hom_g = 
		\sum_{\substack{x,y \in V(G) \; : \; d_G(x) > \Delta}}
		{M_{\lfloor k/2 \rfloor,\Delta,f}(x,y) \cdot 
			M_{\lceil k/2 \rceil,\Delta,f'}(x,y) }.
		$$
		By Item 2 of Lemma \ref{lem:path_counting}, one can compute $M_{\lfloor k/2 \rfloor,\Delta,f}(x,y)$ and
		$M_{\lceil k/2 \rceil,\Delta,f'}(x,y)$ for all pairs 
		$x,y \in V(G)$ with $d_G(x) > \Delta$ in time $\tilde{O}(m^2/\Delta)$. Furthermore, the number of such pairs $x,y$ is $O(m^2/\Delta)$, because there are $O(m/\Delta)$ choices for $x$ and $n = O(m)$ choices for $y$. We conclude that the above sum can be computed in time 
		$\tilde{O}(m^2/\Delta) = \tilde{O}(m^{2 - 1/\lceil k/2 \rceil})$, as required. This completes the proof of the theorem. 
	\end{proof}

	\appendix
	
	\section{Proof of Lemma \ref{lem:hom-linear-combination}}\label{sec:matrix-Lovasz}
	
	
	\noindent
	The proof of Lemma \ref{lem:hom-linear-combination} uses the notion of {\em tensor product} of digraphs, which we now recall. 
	
	\begin{definition}[tensor product of directed graphs]\label{def:tensor-directed}
		Let $G_1$ and $G_2$ be directed graphs. The {\em tensor product} of $G_1$ and $G_2$, denoted $G_1 \times G_2$, is a directed graph with vertex set
		$V(G_1 \times G_2) = V(G_1) \times V(G_2)$, and edge set 
		
		$$E(G_1 \times G_2) = 
		\{((x_1,x_2), (y_1,y_2)) \, | \, (x_1,y_1) \in E(G_1) \textup{ and } (x_2,y_2) \in E(G_2)\}.$$
	\end{definition}
	
	\noindent A key property of the directed tensor product is that the parameter $\hom(\vec{H},\cdot)$ is multiplicative with respect to it (for any digraph $\vec{H}$). That is, for every pair of directed graphs $\vec{G_1}, \vec{G_2}$, it holds that
	
	\begin{equation}\label{eq:hom-multiplicative-for-tensor}
	\hom(\vec{H}, \vec{G_1} \times \vec{G_2}) = 
	\hom(\vec{H}, \vec{G_1}) \cdot \hom(\vec{H}, \vec{G_2}). 
	\end{equation}
	
	\noindent To see that \eqref{eq:hom-multiplicative-for-tensor} holds, simply observe that for functions $\varphi_i : V(\vec{H}) \rightarrow V(\vec{G_i})$ (where $i \in \{1,2\}$), the function $v \mapsto (\varphi_1(v),\varphi_2(v))$ is a homomorphism from $\vec{H}$ to $\vec{G_1} \times \vec{G_2}$ if and only if $\varphi_i$ is a homomorphism from $\vec{H}$ to $\vec{G_i}$ for each $i \in \{1,2\}$. 
	
	We will also need the following (trivial) observation regarding tensor products and degeneracy.
	
	\begin{observation}\label{obs:tensor-product-degrees}
		Let $\vec{F},\vec{G}$ be directed graphs. If the out-degree of every vertex in $G$ is at most $d$, then the out-degree of every vertex in 
		$\vec{F} \times \vec{G}$ is at most $v(\vec{F}) \cdot d$.
	\end{observation}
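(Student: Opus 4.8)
\textbf{Proof proposal for Observation \ref{obs:tensor-product-degrees}.} The plan is to simply count, for an arbitrary vertex of $\vec{F} \times \vec{G}$, how many out-neighbors it can have, using the definition of the edge set of the tensor product (Definition \ref{def:tensor-directed}).

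First I would fix an arbitrary vertex $(a,x) \in V(\vec{F} \times \vec{G})$, where $a \in V(\vec{F})$ and $x \in V(\vec{G})$. By definition of $E(\vec{F} \times \vec{G})$, a vertex $(b,y)$ is an out-neighbor of $(a,x)$ if and only if $(a,b) \in E(\vec{F})$ and $(x,y) \in E(\vec{G})$. In particular, $b$ must be an out-neighbor of $a$ in $\vec{F}$, and $y$ must be an out-neighbor of $x$ in $\vec{G}$. Hence every out-neighbor of $(a,x)$ is determined by a pair consisting of an out-neighbor $b$ of $a$ in $\vec{F}$ and an out-neighbor $y$ of $x$ in $\vec{G}$, and conversely every such pair $(b,y)$ indeed gives an out-neighbor.

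The number of choices for $b$ is the out-degree of $a$ in $\vec{F}$, which is at most $v(\vec{F})$ (trivially, as $\vec{F}$ has only $v(\vec{F})$ vertices), and the number of choices for $y$ is the out-degree of $x$ in $\vec{G}$, which is at most $d$ by hypothesis. Therefore the out-degree of $(a,x)$ in $\vec{F} \times \vec{G}$ is at most $v(\vec{F}) \cdot d$. Since $(a,x)$ was arbitrary, this proves the observation.

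There is essentially no obstacle here: the statement is a direct unwinding of the definition of the tensor product, and the only "bound" used is the crude count $v(\vec{F})$ for the out-degree in $\vec{F}$ together with the assumed bound $d$ in $\vec{G}$. (If one wished to be slightly more precise one could replace $v(\vec{F})$ by $\Delta^+(\vec{F})$, but the stated bound suffices for the applications in this paper.)
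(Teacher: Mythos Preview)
Your proof is correct and follows essentially the same approach as the paper: both argue that the out-degree of a vertex $(a,x)$ in $\vec{F}\times\vec{G}$ equals $\text{out-deg}_{\vec{F}}(a)\cdot\text{out-deg}_{\vec{G}}(x)$ and then bound the first factor trivially by $v(\vec{F})$ and the second by $d$. Your write-up merely unpacks the counting a bit more explicitly than the paper's one-line version.
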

	
	\begin{proof}
		For each $x \in V(\vec{F})$ and $y \in V(\vec{G})$, the out-degree of $(x,y)$ in $\vec{F} \times \vec{G}$ is 
		$\text{out-deg}_{\vec{F} \times \vec{G}}((x,y)) = 
		\text{out-deg}_{\vec{F}}(x) \cdot \text{out-deg}_{\vec{G}}(y) < v(\vec{F}) \cdot d$.
	\end{proof}
	
	The last ingredient in the proof of Lemma \ref{lem:hom-linear-combination} is the following directed version of a lemma of Erd\H{o}s, Lov\'{a}sz and Spencer \cite{ELS} (see also Proposition 5.44(b) in \cite{Lovasz}).
	For completeness, we give its proof at the end of this appendix (this proof is essentially identical to the undirected case). 
	
	\begin{lemma}[\cite{Lovasz}]\label{lem:hom-matrix-invertible}
		Let $\vec{H_1}, \dots, \vec{H_k}$ be pairwise non-isomorphic directed graphs, and let 
		$c_1, \dots, c_k$ be non-zero constants. Then, there exist directed graphs $\vec{F_1}, \dots, \vec{F_k}$ such that the $k \times k$ matrix \linebreak
		$M_{i,j} = c_j \cdot \hom(\vec{H_j}, \vec{F_i})$, $1 \leq i,j \leq k$, is invertible. 
	\end{lemma}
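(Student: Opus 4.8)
\noindent The plan is to follow, essentially verbatim, the classical argument of Erd\H{o}s, Lov\'{a}sz and Spencer \cite{ELS} (see also \cite{Lovasz}), adapting it to digraphs. First I would reduce the statement to a \emph{linear independence} claim: if $\vec{H_1},\dots,\vec{H_k}$ are pairwise non-isomorphic digraphs, then the functions $\hom(\vec{H_1},\cdot),\dots,\hom(\vec{H_k},\cdot)$, viewed as real-valued functions on the set of all finite digraphs, are linearly independent. Granting this, the functions $g_j:=c_j\cdot\hom(\vec{H_j},\cdot)$ are also linearly independent (the $c_j$ are nonzero), and a standard fact — proved by induction on $k$ via Laplace expansion along the last row of the relevant determinant — shows that any $k$ linearly independent real-valued functions $g_1,\dots,g_k$ on a set $X$ admit points $x_1,\dots,x_k\in X$ with $\det\big(g_j(x_i)\big)_{1\le i,j\le k}\neq 0$. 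Taking $X$ to be the set of digraphs and $\vec{F_i}:=x_i$ gives exactly the invertibility of $M_{i,j}=c_j\cdot\hom(\vec{H_j},\vec{F_i})$.

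\noindent To prove the linear independence claim I would bring in the count $\inj(\vec{H},\vec{G})$ of \emph{injective} homomorphisms. Each homomorphism $\varphi\colon\vec{H}\to\vec{G}$ factors through the quotient digraph $\vec{H}/P$, where $P=P_\varphi$ is the partition of $V(\vec{H})$ into the fibres of $\varphi$, the induced map $\vec{H}/P\to\vec{G}$ being injective; conversely a partition $P$ together with an injective homomorphism $\vec{H}/P\to\vec{G}$ recovers such a $\varphi$. Restricting (as one may) to loopless target digraphs, only the \emph{admissible} partitions $P$ — those no block of which contains an edge of $\vec{H}$ — contribute, and for these $\vec{H}/P$ is again a simple loopless digraph. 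This yields the identity $\hom(\vec{H},\vec{G})=\sum_{P}\inj(\vec{H}/P,\vec{G})$, summed over admissible partitions $P$ of $V(\vec{H})$, in which the partition into singletons contributes exactly $\inj(\vec{H},\vec{G})$ and every other summand involves a digraph with strictly fewer vertices than $\vec{H}$.

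\noindent Next I would show that the functions $\inj(\vec{H}',\cdot)$, ranging over pairwise non-isomorphic digraphs $\vec{H}'$, are linearly independent. The relation ``$\vec{H_1}\preceq\vec{H_2}$ iff there is an injective homomorphism $\vec{H_1}\to\vec{H_2}$'' is a partial order on digraphs up to isomorphism (antisymmetry follows by comparing numbers of vertices and then of edges), and $\inj(\vec{H},\vec{H})=\aut(\vec{H})>0$. Given a finite vanishing combination $\sum_{\vec{H}'\in S}a_{\vec{H}'}\,\inj(\vec{H}',\cdot)\equiv 0$ with all $a_{\vec{H}'}\neq 0$, pick a $\preceq$-minimal $\vec{H}^{\ast}\in S$ and evaluate at $\vec{G}=\vec{H}^{\ast}$: since $\inj(\vec{H}',\vec{H}^{\ast})\neq 0$ forces $\vec{H}'\preceq\vec{H}^{\ast}$, minimality leaves only the term $a_{\vec{H}^{\ast}}\aut(\vec{H}^{\ast})\neq 0$, a contradiction. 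Finally I would transfer this back to $\hom$: expanding each $\hom(\vec{H_j},\cdot)$ via the quotient identity expresses the family $(\hom(\vec{H_j},\cdot))_j$ through the linearly independent family $\{\inj(\vec{H}',\cdot)\}$ by a transformation triangular with respect to vertex count — the summand $\inj(\vec{H_j},\cdot)$ has maximal vertex-count in the expansion of $\hom(\vec{H_j},\cdot)$, occurs there with coefficient $1$, and (since the $\vec{H_j}$ are pairwise non-isomorphic and any strictly smaller quotient $\vec{H_i}/P\cong\vec{H_j}$ would force $|V(\vec{H_i})|>|V(\vec{H_j})|$) cannot be produced by any $\hom(\vec{H_i},\cdot)$ with $i\neq j$. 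Hence it is never cancelled, and linear independence of $\{\hom(\vec{H_j},\cdot)\}_j$ follows.

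\noindent The main point requiring care is the bookkeeping behind the identity $\hom(\vec{H},\vec{G})=\sum_{P}\inj(\vec{H}/P,\vec{G})$ — in particular, checking that passing to loopless targets and admissible partitions makes each $\vec{H}/P$ a bona fide simple loopless digraph, so that $\inj(\vec{H}/P,\cdot)$ is well defined and the factorization of homomorphisms is a bijection. This verification is routine and identical to the undirected case; the rest is standard linear algebra together with Observation~\ref{obs:tensor-product-degrees}-style bookkeeping, which is not even needed here since Lemma~\ref{lem:hom-matrix-invertible} imposes no degeneracy restriction on the $\vec{F_i}$.
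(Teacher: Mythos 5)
Your proposal is correct in substance but takes a genuinely different route from the paper. The paper never passes through linear independence of the functions $\hom(\vec{H_j},\cdot)$: it constructs the graphs $\vec{F_i}$ explicitly as blowups of the $\vec{H_i}$ themselves, with part sizes $x_{i,v}$ treated as variables, so that each entry $\hom(\vec{H_j},\vec{F_i})$ becomes a polynomial; it then shows $\det(M)$ is a non-zero polynomial by extracting its multilinear part, which (after ordering the $\vec{H_i}$ by $v(\vec{H_i})+e(\vec{H_i})$) is the determinant of an upper-triangular matrix with diagonal entries $\aut(\vec{H_i})\cdot\prod_{v}x_{i,v}$, and finally specializes the variables to positive integers. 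You instead prove the Erd\H{o}s--Lov\'{a}sz--Spencer linear-independence statement directly (via the expansion of $\hom$ into injective homomorphism counts of admissible quotients, plus the poset argument for $\inj$) and then invoke the generic fact that $k$ linearly independent functions admit evaluation points whose evaluation matrix is invertible. Both are classical and both work; the paper's version is more self-contained and exhibits the $\vec{F_i}$ concretely as blowups, while yours buys the stronger, reusable statement that the $\hom(\vec{H_j},\cdot)$ themselves are linearly independent.

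One step of yours needs repair: in transferring $\inj$-independence to $\hom$-independence you assert that $\inj(\vec{H_j},\cdot)$ ``cannot be produced by any $\hom(\vec{H_i},\cdot)$ with $i\neq j$''. As stated this is false: if $\vec{H_j}$ is isomorphic to a proper quotient of $\vec{H_i}$ (for instance $\vec{H_j}$ a single directed edge and $\vec{H_i}$ two disjoint directed edges), then $\inj(\vec{H_j},\cdot)$ does occur in the expansion of $\hom(\vec{H_i},\cdot)$. The standard fix, which your own parenthetical essentially contains, is to argue inside a putative vanishing combination $\sum_j a_j\,\hom(\vec{H_j},\cdot)\equiv 0$: choose $j_0$ with $a_{j_0}\neq 0$ and $|V(\vec{H_{j_0}})|$ maximal among such indices. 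Then the coefficient of $\inj(\vec{H_{j_0}},\cdot)$ in the $\inj$-expansion of the combination is exactly $a_{j_0}$, since a nontrivial quotient $\vec{H_i}/P\cong\vec{H_{j_0}}$ would force $|V(\vec{H_i})|>|V(\vec{H_{j_0}})|$ (contradicting maximality) and the trivial quotient would force $\vec{H_i}\cong\vec{H_{j_0}}$ (contradicting pairwise non-isomorphism). This contradicts $\inj$-independence, and with that one-line correction your argument is complete.
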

	
	\begin{proof}[Proof of Lemma \ref{lem:hom-linear-combination}]
		The ``if" part of the lemma is immediate. Let us prove the ``only if" part. 
		So assume that 
		$\textsc{hom-cnt}_{c_1 \vec{H_1} + \dots + c_k \vec{H_k}}$ can be solved in time $O(n^\alpha)$ in $n$-vertex degenerate digraphs. Our goal is to show that $\textsc{hom-cnt}_{\vec{H_i}}$ can be solved in time $O(n^\alpha)$ for all $1 \leq i \leq k$. 
		Let $\vec{G}$ be an $n$-vertex degenerate digraph for which we would like to compute $\hom(\vec{H_i},\vec{G})$ for all $1 \leq i \leq \nolinebreak k$. By Lemma \ref{lem:hom-matrix-invertible}, there are digraphs $\vec{F_1}, \dots, \vec{F_k}$ such that the $k \times k$ matrix $M_{i,j} := c_j \cdot \hom(\vec{H_j},\vec{F_i})$ (for $1 \leq i,j \leq k$) is invertible. For each $1 \leq i \leq k$, we set 
		$b_i := c_1 \cdot \hom(\vec{H_1}, \vec{F_i} \times \vec{G}) + \dots + c_k \cdot \hom(\vec{H_k}, \vec{F_i} \times \vec{G})$ and observe that
		\begin{equation}\label{eq:tensor-product-bi}
		b_i = \sum_{j = 1}^{k}{c_j \cdot \hom(\vec{H_j},\vec{F_i} \times \vec{G})} = 
		\sum_{j = 1}^{k}{c_j \cdot \hom(\vec{H_j},\vec{F_i}) \cdot \hom(\vec{H_j},\vec{G})} = 
		\sum_{j = 1}^{k}{M_{i,j} \cdot \hom(\vec{H_j},\vec{G})}
		\end{equation}
		where we have used \eqref{eq:hom-multiplicative-for-tensor}. 
		We will treat \eqref{eq:tensor-product-bi} (for $1 \leq i \leq k$) as a system of linear equations, where $\hom(\vec{H_1},\vec{G}),\dots,\hom(\vec{H_k},\vec{G})$ are the variables, $M$ is the matrix of the system, and $b_1,\dots,b_k$ are the constant terms. Since $M$ is invertible (as guaranteed by our choice of $\vec{F_1},\dots,\vec{F_k}$), knowing $b_1,\dots,b_k$ would enable us to find $\hom(\vec{H_1},\vec{G}),\dots,\hom(\vec{H_k},\vec{G})$.
		
		Note that for each $1 \leq i \leq k$, the digraph $\vec{F_i} \times \vec{G}$ can be constructed in time $O(n)$, is $O(1)$-degenerate (see Observation \ref{obs:tensor-product-degrees}), and has $v(\vec{F_i}) \cdot n = O(n)$ vertices. Hence, since by our assumption 
		$\textsc{hom-cnt}_{c_1 \vec{H_1} + \dots + c_k \vec{H_k}}$ can be solved in time $O(n^\alpha)$, we can compute $b_1, \dots, b_k$ in time $O(n^\alpha)$ by feeding the graphs $\vec{F_1} \times \vec{G}, \dots, \vec{F_k} \times \vec{G}$ to an algorithm which solves $\textsc{hom-cnt}_{c_1 H_1 + \dots + c_k H_k}$ in time $O(n^\alpha)$ in degenerate digraphs. This in turn enables us to compute $\hom(\vec{H_1},\vec{G}), \dots, \hom(\vec{H_k},\vec{G})$, as required. 	
	\end{proof} 
	
	\begin{proof}[Proof of Lemma \ref{lem:hom-matrix-invertible}]
		Since multiplying the rows of an invertible matrix by non-zero scalars leaves it invertible, we may assume, without loss of generality, that $c_1 = \dots = c_k = 1$. 
		We will also assume \nolinebreak that
		\begin{equation}\label{eq:order-of-H_i}
		v(H_1)+e(H_1) \leq v(H_2)+e(H_2) \leq \dots \leq v(H_k)+e(H_k),
		\end{equation} 
		where, as always, $v(G), e(G)$ denote the number of vertices and the number of edges in $G$, respectively. 
		For every $1 \leq i \leq k$, we define a suitable \textit{blowup} of $H_i$, denoted $F_i$, as follows. Every vertex $v \in V(H_i)$ is replaced by an independent set $I_{i,v}$ of size $x_{i,v} \geq 1$, and every edge $(u,v) \in E(H_i)$ is replaced by a directed complete bipartite graph, connecting $I_{i,u}$ to $I_{i,v}$. The resulting graph is $F_i$. We will consider the values  
		$\{x_{i,v} \, | \, i \in [k], v \in V(H_i)\}$ 
		as variables taking positive integer values, and show that there is an assignment to these variables for which the matrix  
		$M_{i,j} = \textup{hom}(H_i, F_j)$ is invertible. (This matrix is the transpose of the matrix appearing in the statement of the lemma, so this will be sufficient.)
		
		We claim that $D := \det(M)$ is a non-zero polynomial in the variables 
		$\{x_{i,v} \, | \, i \in [k], v \in V(H_i)\}$. In what follows, it will be convenient to use the following notation: for graphs $H,G$, $\text{Hom}(H,G)$ denotes the set of all homomorphisms from $H$ to $G$, and $\text{Inj}(H,G)$ denotes the set of all injective homomorphisms from $H$ to $G$ (so $\hom(H,G) = |\text{Hom}(H,G)|$ and $\inj(H,G) = |\text{Inj}(H,G)|$). 
		We start by observing that for all $i,j \in [k]$,
		\begin{equation}\label{eq:hom(H_i,F_j)}
		\hom(H_i, F_j) = \sum_{\varphi \in \text{Hom}(H_i,H_j)} \,
		\prod_{v \in V(H_j)} x_{j,v}^{|\varphi^{-1}(v)|} \,.
		\end{equation} 
		Indeed, it is easy to see that every homomorphism $\psi : H_i \rightarrow F_j$ corresponds to some homomorphism $\varphi: H_i \rightarrow H_j$ (if $\alpha_j: F_j \rightarrow H_j$ denotes the ``natural map" mapping $I_{j,v}$ to $v$ for all $v \in V(H_j)$, then $\varphi$ can be expressed as $\varphi = \alpha_j \circ \psi$). Moreover, the number of homomorphisms $\psi: H_i \rightarrow F_j$ which correspond to a given homomorphism $\varphi: H_i \rightarrow H_j$ is exactly the number of ways to choose a vertex from $I_{j,\varphi(u)}$ for each $u \in V(H_i)$; this number is exactly the product on the right-hand side of \eqref{eq:hom(H_i,F_j)}.
		
		Since every entry of $M$ is a polynomial in $\{x_{i,v} \, | \, i \in [k], v \in V(H_i)\}$, so is $D = \det(M)$. 
		Our goal is to prove that $D$ is not the zero polynomial. To this end, we will show that the multilinear part of $D$ is non-zero (which clearly implies that $D$ is non-zero). For convenience, let us write $ML(p)$ for the multilinear part of a polynomial $p$ (i.e., the sum of all multilinear monomials of $p$).
		Observe that for every pair $i,j \in [k]$ and for every homomorphism $\varphi: V(H_i) \rightarrow V(H_j)$, 
		the monomial 
		$$
		\prod_{v \in V(H_j)} x_{j,v}^{|\varphi^{-1}(v)|}
		$$ 
		is multilinear if and only if $|\varphi^{-1}(v)| \leq 1$ for all $v$, which is equivalent to $\varphi$ being injective. Thus, using \eqref{eq:hom(H_i,F_j)}, we conclude that the multilinear part of $\hom(H_i,F_j)$ is 
		\begin{equation}\label{eq:multilinear-L_i_j}
		L_{i,j} := ML(\hom(H_i,F_j)) = \sum_{\varphi \in \text{Inj}(H_i,H_j)} \, \prod_{v \in V(H_j)} x_{j,v}^{|\varphi^{-1}(v)|} \,.
		\end{equation} 
		Observe that for each $1 \leq i \leq k$, $L_{i,i} \neq 0$ (as a polynomial). Indeed, injective homomorphisms from $H_i$ to itself are simply automorphisms of $H_i$, so \eqref{eq:multilinear-L_i_j} becomes
		$$
		L_{i,i} = \aut(H_i) \cdot \prod_{v \in V(H_i)}{x_{i,v}}.
		$$
		We now claim that for every pair $1 \leq j < i \leq k$, one has $\text{Inj}(H_i,H_j) = \emptyset$, and hence $L_{i,j} \equiv 0$. Fix any $1 \leq j < i \leq k$. If there exists an injective homomorphism from $H_i$ to $H_j$, then we must have $v(H_i) \leq v(H_j)$ and $e(H_i) \leq e(H_j)$. On the other hand, we have assumed in (\ref{eq:order-of-H_i}) that $v(H_j)+e(H_j) \leq v(H_i)+e(H_i)$, as $j < i$. It follows that $v(H_i) = v(H_j)$ and $e(H_i) = e(H_j)$, which implies that $H_i$ and $H_j$ are isomorphic, contradicting our assumption that $H_1, \dots, H_k$ are pairwise non-isomorphic. Therefore, $\text{Inj}(H_i,H_j) = \emptyset$, which immediately implies that $L_{i,j} \equiv 0$. 
		
		Now consider the $k \times k$ matrix $L$ whose entries are $L_{i,j}$ ($1 \leq i,j \leq k$). In the previous paragraph we established that $L$ is 
		upper-triangular, and that its diagonal entries are non-zero (as polynomials). It follows that $\det(L)$ is a non-zero polynomial. Finally, we show that the multilinear part of \linebreak $D = \det(M)$ is $ML(D) = \det(L)$.
		To this end, we recall that
		\begin{equation}\label{eq:det(M)_1}
		D = \sum_{\sigma \in S_k} \text{sgn}(\sigma) \cdot \prod_{i=1}^k M_{i,\sigma(i)} = 
		\sum_{\sigma \in S_k} \text{sgn}(\sigma) \cdot \prod_{i=1}^k \hom(H_{i}, F_{\sigma(i)}).
		\end{equation} 
		Now, since the sets of variables of $\hom(H_1,F_{\sigma(1)}), \dots, \hom(H_k,F_{\sigma(k)})$ are pairwise-disjoint (recall that the variables of $\hom(H_i,F_j)$ are $x_{j,v}$, $v \in V(H_j)$), we have
		\begin{equation}\label{eq:det(M)_2}
		ML\left( \prod_{i=1}^k \hom(H_{i}, F_{\sigma(i)}) \right) = 
		\prod_{i=1}^k ML(\hom(H_{i}, F_{\sigma(i)})) = 
		\prod_{i=1}^k L_{i,\sigma(i)}.
		\end{equation}
		By combining \eqref{eq:det(M)_1} and \eqref{eq:det(M)_2}, we get
		$$
		ML(D) = \sum_{\sigma \in S_k} \text{sgn}(\sigma) \cdot \prod_{i=1}^k L_{i,\sigma(i)} = \det(L),
		$$
		as required. As $\det(L)$ is a non-zero polynomial, $D$ is a non-zero polynomial as well.  
		It follows that there is an assignment of positive natural numbers to the variables $\{x_{i,v} \, | \, i \in [k], v \in V(H_i)\}$ for which $D \neq 0$. Evidently, for this assignment, the matrix $M$ is invertible (as $D = \det (M)$). This completes \nolinebreak the \nolinebreak proof.  
	\end{proof}
	

\end{document}